\newtheorem{lemma}{\bf Lemma}
\newtheorem{theorem}[lemma]{\bf Theorem}
\begin{document}
%
\title{Dynamic Deferral of Workload for Capacity Provisioning in Data Centers}

\author{\IEEEauthorblockN{Muhammad Abdullah Adnan\IEEEauthorrefmark{1}, Ryo Sugihara\IEEEauthorrefmark{2}, Yan Ma\IEEEauthorrefmark{3} and Rajesh K. Gupta\IEEEauthorrefmark{1}}
\IEEEauthorblockA{\IEEEauthorrefmark{1}University of California San Diego, \IEEEauthorrefmark{2}Amazon.com, \IEEEauthorrefmark{3}Shandong University}
}

\maketitle

\begin{abstract}
Recent increase in energy prices has led researchers to find better ways for capacity provisioning in data centers to reduce energy wastage due to the variation in workload. This paper explores the opportunity for cost saving utilizing the flexibility from the Service Level Agreements (SLAs) and proposes a novel approach for capacity provisioning under bounded latency requirements of the workload. We investigate how many servers to be kept active and how much workload to be delayed for energy saving while meeting every deadline. We present an offline LP formulation for capacity provisioning by dynamic deferral and give two online algorithms to determine the capacity of the data center and the assignment of workload to servers dynamically. We prove the feasibility of the online algorithms and show that their worst case performance are bounded by a constant factor with respect to the offline formulation. We validate our algorithms on a MapReduce workload by provisioning capacity on a Hadoop cluster and show that the algorithms actually perform much better in practice compared to the naive `follow the workload' provisioning, resulting in 20-40\% cost-savings.
\end{abstract}


%

\section{Introduction}
With the advent of cloud computing, data centers are emerging all over the world and their energy consumption becomes significant; as estimated 61 million MWh per year, costing about 4.5 billion dollars \cite{23}. Naturally, energy efficiency in data centers has been pursued in various ways including the use of renewable energy \cite{22,25} and improved cooling efficiency \cite{26,30,17}, etc. Among these, improved scheduling algorithm is a promising approach for its broad applicability regardless of hardware configurations. Among the attempts to improve scheduling \cite{17, 9}, recent effort has focussed on optimization of schedule under performance constraints imposed by Service Level Agreements (SLAs). Typically, a SLA specification provides a measure of flexibility in scheduling that can be exploited to improve performance and efficiency \cite{n9,n10}. To be specific, latency is an important performance metric for any web-based service and is of great interest for service providers who run their services on data centers.
The goal of this paper is to utilize the flexibility from the SLAs for different types of workload to reduce energy consumption. The idea of utilizing SLA information to improve performance and efficiency is not entirely new. Recent work explores utilization of application deadline information for improving the performance of the applications (e.g. see \cite{n10, n6}). But the opportunities for energy efficiency remain unexplored, certainly in a manner that seeks to establish bounds on the energy cost from the proposed solutions.


In this paper, we are interested in minimizing the energy consumption of a data center under guarantees on latency/ deadline. We use the deadline information to defer some tasks so that we can reduce the total cost for energy consumption for executing the workload and switching the state of the servers. We determine the portion of the released workload to be executed at the current time and the portions to be deferred to be executed at later time slots without violating deadlines. Our approach is similar to `valley filling' that is widely used in data centers to utilize server capacity during the periods of low loads \cite{9}. But the load that is used for valley filling is mostly background/maintenance tasks (e.g. web indexing, data backup) which is different from actual workload. In fact current valley filling approaches ignore the workload characteristics for capacity provisioning. In this paper, we determine how much work to defer for valley filling in order to reduce the current and future energy consumption while provably ensuring satisfaction of SLA requirements. Later we generalize our approach for more general workloads where different jobs have different deadlines.

This paper makes three contributions. First, we present an LP formulation for capacity provisioning with dynamic deferral of workload. The formulation not only determines capacity but also determines the assignment of workload for each time slot. As a result the utilization of each server can be determined easily and resources can be allocated accordingly. Therefore this method well adapts to other scheduling policies that take into account dynamic resource allocation, priority aware scheduling, etc.

Second, we design two optimization based online algorithms depending on the nature of the deadline. For uniform deadline, our algorithm named {\it Valley Filling with Workload (VFW($\delta$))}, looks ahead $\delta$ slots to optimize the total energy consumption. The algorithm uses the valley filling approach to defer some workload to execute in the periods of low loads. For nonuniform deadline, we design a {\it Generalized Capacity Provisioning (GCP)} algorithm that reduces the switching (on/off) of servers by balancing the workloads in adjacent time slots and thus reduces energy consumption.  We prove the feasibility of the solutions and show that the performance of the online algorithms are bounded by a constant factor with respect to the offline formulation.

Third, we validate our algorithms using MapReduce traces (representative workload for data centers) and evaluate cost savings achieved via dynamic deferral. We run simulations to deal with a wide range of settings and show significant savings in each of them. Over a period of 24 hours, we find more than 40\% total cost saving for GCP and around 20\% total cost saving for VFW($\delta$) even for small deadline requirements. We compare the two online algorithms with different parameter settings and find that GCP gives more cost savings than VFW($\delta$). In order to show that our algorithms work on real systems, we perform experiments on a 35 node Hadoop cluster and find energy savings of $\sim$6.02\% for VFW($\delta$) and $\sim$12\% for GCP over a period of 4 hours. The experimental results show that the peak energy consumption for the operation of a data center can be reduced by provisioning capacity and scheduling workload using our algorithms.




The rest of the paper is organized as follows. Section II presents the model that we use to formulate the optimization and gives the offline formulation. In Section III, we present the VFW($\delta$) algorithm for determining capacity and workload assignment dynamically when the deadline is uniform. In Section IV, we illustrate the GCP algorithm with nonuniform deadline. Section V discusses the extension of our algorithms for long jobs. In Section VI and VII, we illustrate the simulation and experimental results respectively. In Section VIII, we describe the state of the art research related to capacity provisioning and Section IX concludes the paper.

\section{Model Formulation}
In this section, we describe the model we use for capacity provisioning via dynamic deferral. We note that the assumptions used in this model are minimal and this formulation captures many properties of current data center capacity and workload characteristics.

\subsection{Workload Traces}
To build a realistic model, we need real workload from data centers but the data center providers are reluctant to publish their production traces due to privacy issues and competitive concerns. To overcome the scarcity of publicly available traces, efforts have been made to extract summary statistics from production traces and workload generators based on those statistics have been proposed \cite{n4,n5}. For the purposes of this paper, we use such a workload generator and use the MapReduce traces released by Chen et al \cite{n4}. MapReduce framework is widely used in Data centers and acts as representative workload where each of the jobs consists of 3 steps of computation: map, shuffle and reduce \cite{n3}. Figure~\ref{fig:original_workload}(a) illustrates the statistical  MapReduce traces over 24 hours generated from real Facebook traces.


\begin{figure}[!t]
\centerline{\subfigure[Original Workload]{\includegraphics[width
=1.7in]{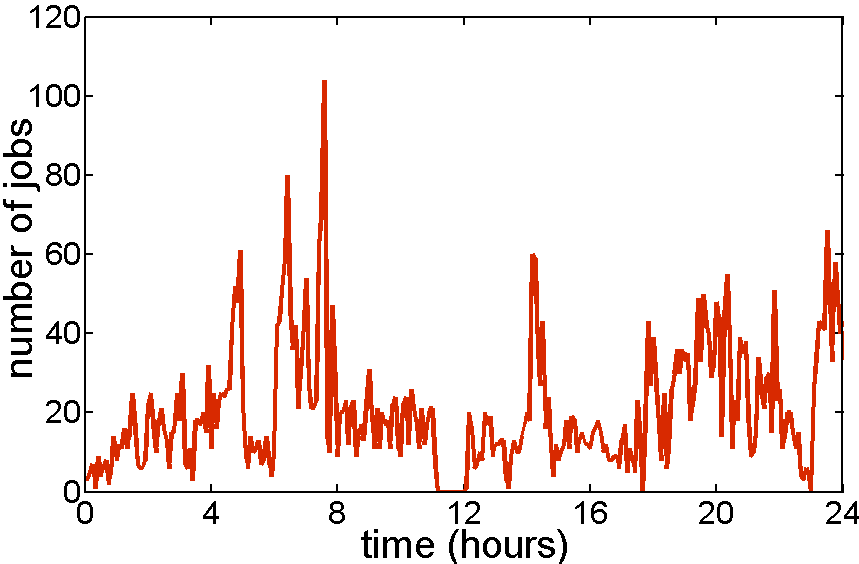}
}
\hfil
\subfigure[Batch and Interactive Job]{\includegraphics[width=1.7in]{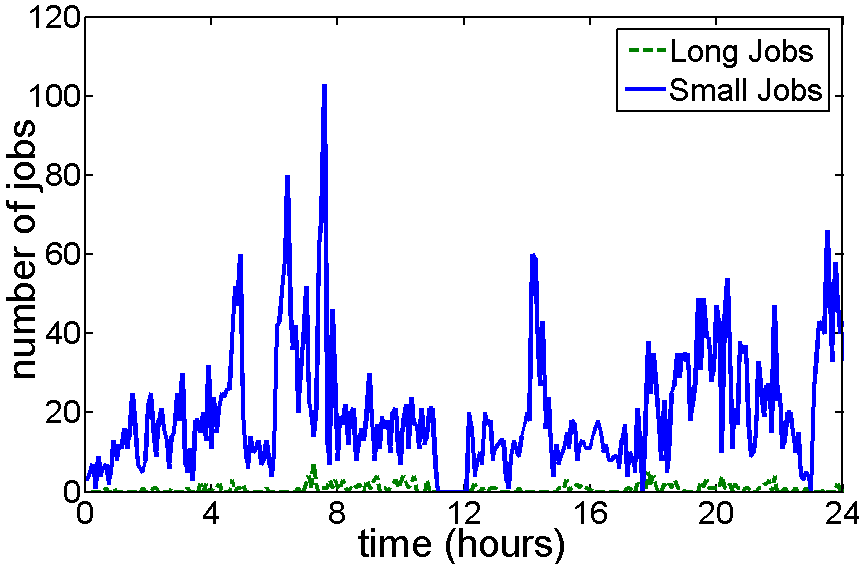}
}}
\caption{Illustration of (a) original workload and (b) distinction between batch and small interactive jobs.
}
\label{fig:original_workload}
\end{figure}

Typically the workload traces consist of a mix of batch and interactive jobs. Chen et al. carried out an interactive analysis to classify the workload and showed that the workload is dominated ($\sim$98\%) by small and interactive jobs showing significant and unpredictable variation with time. Table~\ref{table:trace} illustrates the classification on the MapReduce traces by $k$-means clustering based on the sizes of map, shuffle and reduce stages (in bytes) with $k=10$ and Figure~\ref{fig:original_workload}(b) shows the distinction in time variation between the long batch jobs and small interactive jobs. To adapt with the large variation in the small and interactive workload, valley filling methods have been proposed using the low priority batch jobs to fill in the periods of low workload \cite{Liu}. However, Chen et al. have shown that the portion of low priority long jobs ($\sim$2\%) are insufficient to reduce the variation (to smooth) in the workload curve \cite{n5}. In this paper, we propose valley filling with workload (mix of long and interactive jobs) and devise algorithms for capacity provisioning by scheduling jobs under bounded latency requirements.


\begin{table}[!t]
\caption{Cluster Sizes and Medians by $k$-means Clustering on the MapReduce Trace} 
\centering  
\begin{tabular}{r r r r r} 
\hline\hline                        
\# Jobs & \% Jobs & Input & Shuffle & Output \\ [0.5ex] 
\hline                  
5691 & 96.56 & 15 KB & 0 & 685 KB \\ 
116 & 1.97 & 44 GB & 15 GB  & 84 MB  \\
27 & 0.46 & 56 GB & 145 GB & 16 GB \\
23 & 0.39 & 123 GB & 0 & 52 MB \\
19 & 0.32 & 339 KB & 0 & 48 GB\\ [1ex]      
8 & 0.14 & 203 GB & 404 GB & 3 GB \\
5 & 0.08 & 529 GB & 0 & 53 KB \\
3 & 0.05 & 46 KB & 0 & 199 GB\\
1 & 0.02 & 7 TB & 48 GB & 101 GB \\
1 & 0.02 & 913 GB& 8 TB & 61 KB \\
\hline 
\end{tabular}
\label{table:trace} 
\end{table}












\subsection{Workload Model}
We consider a workload model where the total workload varies over time. The time interval we are interested in is $t\in \{0,1,\ldots,T\}$ where $T$ can be arbitrarily large. In practice, $T$ can be a year and the length of a time slot $\tau$ could be as small as 2 minutes (the minimum time required to change power state of a server). In our model, each job has a deadline $D$ (in terms of number of slots) associated with it, where $D$ is a nonnegative integer. In other words, a job released at time $t$, needs to be executed within time slot $t + D$. We first model for small interactive jobs having length less than $\tau$. Later in Section V, we extend our model for mix of jobs with arbitrary lengths. Based on the nature of deadlines, we have two cases: (i) uniform deadline, when deadline is uniform for all the jobs; (ii) non-uniform deadlines, when different jobs have different deadlines. In Section II and III, we formulate model and algorithm for the case of uniform deadline and the non-uniform deadline case is considered in Section IV.
Let $L_t$ be the amount of workload released at time slot $t$. Since the deadline $D$ is uniform for all the jobs, the total amount of work $L_t$ must be executed by the end of time slot $t + D$. Since $L_t$ varies over time, we often refer to it as a {\it workload curve}.

We consider data center as a collection of homogeneous servers. The total number of servers $M$ is fixed and given but each server can be turned on/off to execute the workload. We normalize $L_t$ by the processing capability of each server i.e. $L_t$ denotes the number of servers required to execute the workload at time $t$. We assume for all $t$, $L_t\le M$. Let $x_{i,d,t}$ be the portion of the released workload $L_t$ that is assigned to be executed at server $i$ at time slot $t+d$ where $d$ represents the deferral with $0\le d\le D$. Let $m_t$ be the number of active servers during time slot $t$. Then $\sum_{i=1}^{m_t} \sum_{d=0}^{D} x_{i,d,t} = L_t \text{ and } 0\le x_{i,d,t}\le 1$.

Let $x_{i,t}$ be the total workload assigned at time $t$ to server $i$ and $x_t$ be the total assignment at time $t$. Then we can think of $x_{i,t}$ as the utilization of the $i$th server at time $t$ i.e. $0\le x_{i,t}\le 1$. Thus $\sum_{d=0}^{D} x_{i,d,t-d} = x_{i,t} \text{ and } \sum_{i=1}^{m_t} x_{i,t} = x_t$. From the data center perspective, we focus on two important decisions during each time slot $t$: (i) determining $m_t$, the number of active servers, and (ii) determining $x_{i,d,t}$, assignment of workload to the servers.

\subsection{Cost Model}
The goal of this paper is to minimize the cost (price) of energy consumption in data centers. The energy cost function consists of two parts: operating cost and switching cost. {\it Operating cost} is the cost for executing the workload which in our model is proportional to the assigned workload. We use the common model for energy cost for typical servers which is an affine function:
$$C(x)=e_0+e_1x$$
where $e_0$ and $e_1$ are constants (e.g. see \cite{20}) and $x$ is the assigned workload (utilization) of a server at a time slot. Although we use this general model for cost function, other models considering nonlinear parameters such as temperature, frequency can easily be adopted in the model which will make it a nonlinear optimization problem. Our algorithms can be applied for such nonlinear models by using techniques for solving nonlinear optimizations as each optimization is considered as a single independent step in the algorithms.


{\it Switching cost $\beta$} is the cost incurred for changing state (on/off) of a server. We consider the cost of both turning on and turning off a server. Switching cost at time $t$ is defined as follows: $$S_t = \beta |m_t-m_{t-1}|$$ where $\beta$ is a constant (e.g. see \cite{9,21}).

\subsection{Optimization Problem}
Given the models above, the goal of a data center is to choose the number of active servers (capacity) $m_t$ and the dispatching rule $x_{i,d,t}$ to minimize the total cost during $[1,T]$, which is captured by the following optimization:
\begin{IEEEeqnarray}{lll}
\label{equn:opt1}
 \text{min}_{x_t,m_t}\quad & \sum_{t=1}^T \sum_{i=1}^{m_t}  C(x_{i,t}) + \beta \sum_{t=1}^T |m_t-m_{t-1}|\\
 \text{subject to}\quad &  \sum_{i=1}^{m_t} \sum_{d=0}^{D} x_{i,d,t} = L_t  \qquad\qquad\qquad \forall t\nonumber\\
 &  \sum_{i=1}^{m_t} \sum_{d=0}^{D} x_{i,d,t-d} \le m_t  \qquad\qquad\quad \forall t\nonumber\\
 &  \sum_{d=0}^{D} x_{i,d,t-d} \le 1  \qquad\qquad\qquad\forall i, \forall t\nonumber\\
 &  0\le m_t \le M \qquad\qquad\qquad\qquad \forall t\nonumber\\
 &  x_{i,d,t}\ge 0  \qquad\qquad\qquad\qquad\forall i, \forall d, \forall t.\nonumber
\end{IEEEeqnarray}
Since the servers are identical, we can simplify the problem by dropping the index $i$ for $x$. More specifically, for any feasible solution $x_{i,d,t}$, we can make another solution by $x_{i,d,t}=\sum_{i=1}^{m_t} x_{i,d,t}/m_t$ (i.e., replacing every $x_{i,d,t}$ by the average of $x_{i,d,t}$ for all $i$) without changing the value of the objective function while satisfying all the constraints after this conversion. Then we have the following optimization equivalent to~(\ref{equn:opt1}):
\begin{IEEEeqnarray}{cll}
\label{equn:opt2}
 \text{min}_{x_t,m_t}\quad & \sum_{t=1}^T  m_tC(x_t/m_t) + \beta \sum_{t=1}^T |m_t-m_{t-1}|&\\
 \text{subject to}\quad &  \sum_{d=0}^{D} x_{d,t} = L_t   &\forall t \nonumber\\
  &  \sum_{d=0}^{D} x_{d,t-d} \le m_t   &\forall t\nonumber\\
  &  0\le m_t \le M  &\forall t\nonumber\\
  &  x_{d,t}\ge 0    &\forall d, \forall t.\nonumber
\end{IEEEeqnarray}
where $x_{d,t}$ represents the portion of the workload $L_t$ to be executed at a server at time $t+d$. We further simplify the problem by showing that any optimal assignment for (\ref{equn:opt2}) can be converted to an equivalent assignment that uses earliest deadline first (EDF) policy (see Figure~\ref{fig:lemma1}). More formally, we have the following lemma:

\begin{lemma}
\label{lemma:edf}  Let $x^*_{t_r}$ and $x^*_{t_s}$ be the optimal assignments of workload obtained from the solution of optimization~(\ref{equn:opt2}) at times $t_r$ and $t_s$ respectively where $t_s>t_r$ and $t_s-t_r = \theta < D$. If $\exists \delta$ with $\sum_{d=0}^{\delta-1} x^*_{d, t_r-d}\ne 0$ and $\sum_{d=\theta+\delta+1}^{D} x^*_{d, t_s-d}\ne 0$ for any $0<\delta<(D-\theta)$ then we can obtain another assignments $x^{e}_{t_r}= x^*_{t_r}$ and $x^{e}_{t_s}= x^*_{t_s}$ where $\sum_{d=0}^{\delta-1} x^e_{d, t_r-d}= 0$ and $\sum_{d=\theta+\delta+1}^{D} x^e_{d, t_s-d}= 0$.
\end{lemma}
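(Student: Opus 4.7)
The plan is to prove this by an exchange argument that encodes an earliest-deadline-first (EDF) preference. The hypothesis says that at $t_r$ the schedule is executing some mass with small deferral $d<\delta$, i.e., work released at $t_r-d$ whose deadline $t_r-d+D$ is comparatively late, while at the later slot $t_s$ it is executing some mass with large deferral $d>\theta+\delta$, i.e., work released at $t_s-d<t_r-\delta$ whose deadline is comparatively early. An optimal schedule should be indifferent to swapping these two groups, since the earlier-deadline work can always be pushed to the earlier slot.

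Concretely, I would define an elementary swap. Take an $\epsilon$-unit from $x^*_{d_1,t_r-d_1}$ with $d_1<\delta$ and re-index it as contributing to $x^e_{d_1+\theta,\,t_s-(d_1+\theta)}$ (executed at $t_s$ instead of $t_r$); simultaneously take an $\epsilon$-unit from $x^*_{d_2,t_s-d_2}$ with $d_2>\theta+\delta$ and re-index it as $x^e_{d_2-\theta,\,t_r-(d_2-\theta)}$ (executed at $t_r$ instead of $t_s$). The feasibility checks reduce to arithmetic on the new deferral indices: $0<d_1+\theta<\delta+\theta<D$ because $\delta<D-\theta$, and $0<\delta<d_2-\theta\le D-\theta<D$; so both new deferrals lie in $[0,D]$. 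The release times $t_r-d_1$ and $t_s-d_2$ match the re-indexed positions by construction, and no other time slot is touched.

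Cost invariance is immediate once the swap is defined: the total mass assigned to $t_r$ and to $t_s$ is preserved, so the same server counts $m_{t_r}$ and $m_{t_s}$ remain feasible, the operating term $m_tC(x_t/m_t)$ is unchanged at both slots and elsewhere, and the switching term $\beta|m_t-m_{t-1}|$ is unaffected for every $t$. Hence the swapped assignment achieves the same objective value and the optimal property is preserved.

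To drive both designated sums to zero, I would iterate the elementary swap with $\epsilon=\min(A_1,A_2)$, where $A_1$ and $A_2$ denote the two sums appearing in the hypothesis. A single pass zeros the smaller side; any residue on the other side can be absorbed by a secondary exchange that routes through mass with intermediate deferral (the band $[\delta,\,D-\theta]$ at $t_r$, or the band $[\theta,\,\theta+\delta]$ at $t_s$), which is legal because a shift by $\pm\theta$ keeps those indices inside $[0,D]$ and does not repopulate the targeted endpoints. I expect the principal obstacle to be precisely this bookkeeping: one must ensure that the chain of exchanges used to reconcile the unequal amounts $A_1$ and $A_2$ terminates in finitely many steps and never reintroduces mass in the two ranges we have just cleared. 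The feasibility and cost-invariance pieces, by contrast, follow directly from the index arithmetic above and will consume little of the proof.
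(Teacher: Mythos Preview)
Your elementary swap is set up correctly, and the feasibility and cost-invariance checks are sound; that part matches the paper's underlying idea. The gap is exactly where you anticipated it, but the proposed fix does not work. Shifting mass from the band $[\theta,\theta+\delta]$ at $t_s$ to $t_r$ produces deferrals in $[0,\delta]$, which overlaps the forbidden range $[0,\delta-1]$; symmetrically, shifting the band $[\delta,D-\theta]$ from $t_r$ to $t_s$ lands in $[\delta+\theta,D]$, overlapping $[\theta+\delta+1,D]$. Only the single boundary indices $\delta$ and $\theta+\delta$ are safe, and nothing guarantees they carry enough mass to absorb the residue. Concretely, take $D=5$, $\theta=1$, $\delta=2$, with $L_{t_r}=1$, $L_{t_r-1}=L_{t_r-3}=\tfrac12$ and the optimal schedule executing all of $L_{t_r}$ at $t_r$ and the other two pieces at $t_s$: then $A_1=1$, $A_2=\tfrac12$, and no rearrangement confined to $\{t_r,t_s\}$ can zero both sums while keeping $x^e_{t_r}=x^e_{t_s}=1$, because the total mass released at or before $t_r-\delta$ that is present at those two slots is only $\tfrac12$.

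The paper avoids this by not treating $\delta$ as a fixed input. Its proof performs the EDF reordering in one shot and \emph{determines} $\delta$ as the threshold at which the cumulative masses balance, letting the boundary term $x^e_{\delta,t_r-\delta}$ absorb whatever residual is needed so that $x^e_{t_r}=x^*_{t_r}$ holds exactly (and symmetrically at $t_s$). So the missing idea is not more careful bookkeeping on iterated swaps; it is to let the threshold move. Once you adopt that viewpoint, your single balanced swap already finishes the argument and the secondary-exchange machinery becomes unnecessary.
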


\begin{proof}
We prove it by constructing $x^{e}_{t_r}$ and $x^{e}_{t_s}$ from $x^*_{t_r}$ and $x^*_{t_s}$. We change the assignments $x^*_{d,t_r}$, $0\le d \le (D-\theta)$ and $x^*_{d,t_s}$, $\theta \le d \le D$ to obtain $x^{e}_{t_r}$ and $x^{e}_{t_s}$ as illustrated in Figure~\ref{fig:lemma1}.
\begin{figure}[!ht]
\begin{center}
\includegraphics[width=3.3in]{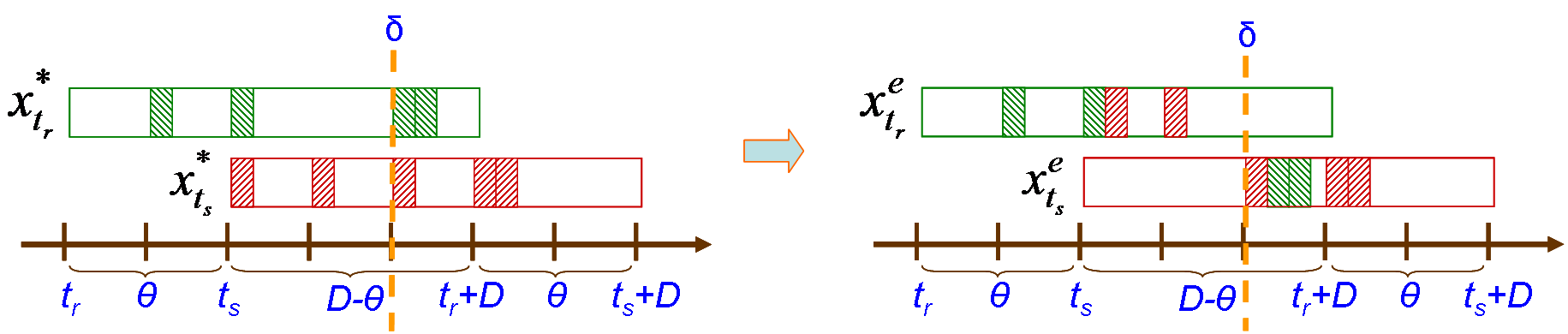}
\caption{Assignments can be determined from their release times and EDF policy.}
\label{fig:lemma1}
\end{center}
\end{figure}
We now determine $\delta$. Note that all the workloads released between (including) time slots $t_s-D$ to $t_r$ can be executed at time $t_r$ without violating deadline since $t_r-D<t_s-D<t_r-\delta <t_r$. Also all the workloads released between (including) time slots $t_s-D$ to $t_r$ can be executed at time $t_s$ without violating deadline since $t_s-D<t_r-\delta <t_r<t_s$. Hence the new assignment of workloads cannot violate any deadline. We determine $\delta$ at a point where $\sum_{d=\delta+1}^{D-\theta} x^e_{d, t_r-d} = \sum_{d=\delta+1}^{D-\theta} x^*_{d, t_r-d} + \sum_{d=\theta+ \delta+1}^{D} x^*_{d, t_s-d}$ and $\sum_{d=0}^{\delta-1} x^e_{d, t_r-d}= 0$ and $x^e_{\delta, t_r-\delta} = \sum_{d=0}^{D-\theta} x^*_{d,t_r} - \sum_{d=\delta+1}^{D-\theta} x^e_{d, t_r-d}$ such that $x^{e}_{t_r}= x^*_{t_r}$. Similarly for $x^{e}_{t_s}$, we have the new assignment as: $\sum_{d=\theta}^{\theta+\delta-1} x^e_{d, t_s-d}$ $=$
$\sum_{d=0}^{\delta-1} x^*_{d, t_r-d}$ $ + \sum_{d=\theta}^{\theta+ \delta-1} x^*_{d, t_s-d}$ and $\sum_{d=\theta+ \delta+1}^{D} x^e_{d, t_s-d}$ $= 0$ and $x^e_{\theta+\delta, t_s-\theta-\delta} = \sum_{d=\theta}^{D} x^*_{d,t_s} - \sum_{d=\theta}^{\theta+\delta-1} x^e_{d, t_s-d}$ such that $x^{e}_{t_s}= x^*_{t_s}$.
\end{proof}

According to lemma~\ref{lemma:edf}, we do not need both $t$ and $d$ as indices of $x$. We can use the release time $t$ to determine the deadline $t+D$ and differentiate between the jobs using their deadlines. Thus, we drop the index $d$ of $x$. At time $t$, unassigned workload from $L_{t-D}$ to $L_t$ is executed according to EDF policy while minimizing the objective function. To formulate the constraint that no assignment violates any deadline we define delayed workload $l_t$ with maximum deadline $D$.
\begin{equation*}
l_t =
\begin{cases}
0 & \text{if } t \le D,\\
L_{t-D} & \text{otherwise.}
\end{cases}
\end{equation*}
We call the delayed curve $l_t$ for the workload as {\it deadline curve}. Thus we have two fundamental constraints on the assignment of workload for all $t$:

\begin{enumerate}
\item[(C1)] Deadline Constraint: $\sum_{j=1}^{t} l_j \le \sum_{j=1}^{t} x_j$
\item[(C2)] Release Constraint: $\sum_{j=1}^{t} x_j \le \sum_{j=1}^{t} L_j$
\end{enumerate}

Condition (C1) says that all the workloads assigned up to time $t$ cannot violate deadline and Condition (C2) says that the assigned workload up to time $t$ cannot be greater than the total released workload up to time $t$. Using these constraints we reformulate the optimization~(\ref{equn:opt2}) as follows:

\begin{IEEEeqnarray}{cLl}
\label{equn:opt3}
 \text{min}_{x_t,m_t}\quad & \sum_{t=1}^T  m_tC(x_t/m_t) + \beta \sum_{t=1}^T |m_t-m_{t-1}|&\\
 \text{subject to}\quad & \sum_{j=1}^{t} l_j \le \sum_{j=1}^{t} x_j \le \sum_{j=1}^{t} L_j  & \forall t\nonumber\\
 &  \sum_{j=1}^{T} x_j = \sum_{j=1}^{T} L_j   & \nonumber\\
 &  0\le x_t \le m_t \le M   & \forall t\nonumber
\end{IEEEeqnarray}

Since the operating cost function $C(.)$ is an affine function, the objective function is linear as well as the constraints.  Hence it is clear that the optimization (\ref{equn:opt3}) is a linear program. Note that capacity $m_t$ in this formulation is not constrained to be an integer. This is acceptable because data centers consists of thousands of active servers and we can round the resulting solution with minimal increase in cost. Figure~\ref{fig:solutionrandom}(a) illustrates the offline optimal solutions for $x_t$ and $m_t$ for a dynamic workload generated randomly. The performance of the optimal offline solution on two realistic workload are provided in Section VI.

\begin{figure}[!t]
\centerline{\subfigure[Offline optimal]{\includegraphics[width
=1.7in]{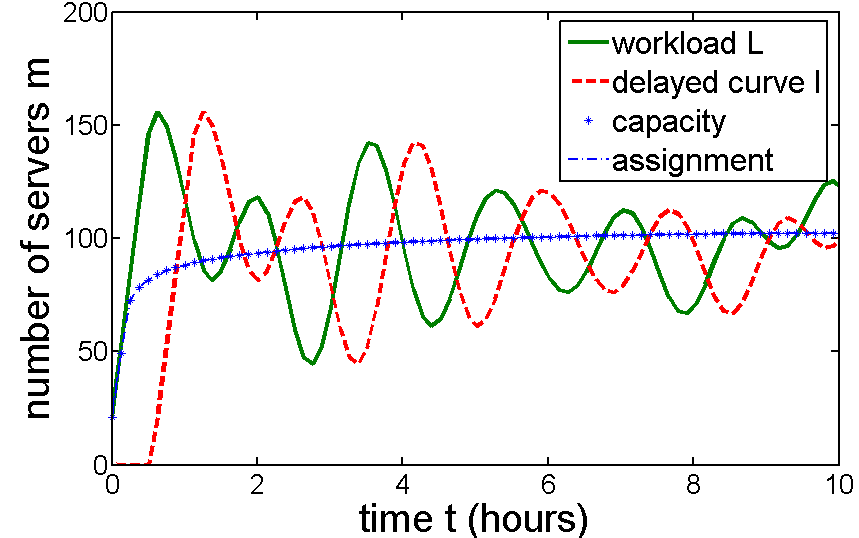}
}
\hfil
\subfigure[VFW($\delta$)]{\includegraphics[width=1.7in]{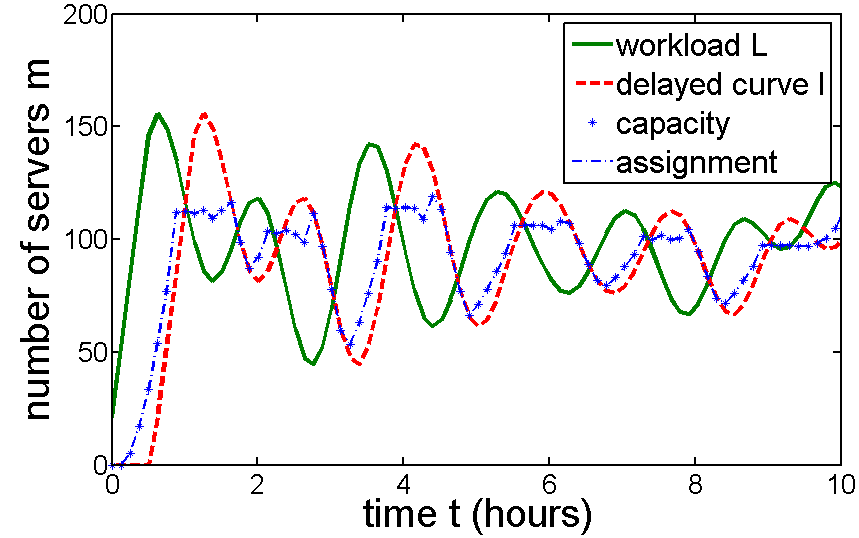}
}}
\caption{Illustration of (a) offline optimal solution and (b) VFW($\delta$) for arbitrary workload generated randomly; time slot length = 2 min, $D=15$, $\delta=10$.}
\label{fig:solutionrandom}
\end{figure}

\section{Valley Filling with Workload}
In this section, we consider the online case, where at any time $t$, we do not have information about the future workload $L_{t'}$ for $t'>t$. At each time $t$, we determine the $x_t$ and $m_t$  by applying optimization over the already released unassigned workload which has deadline in future $D$ slots. Note that the workload released at or before $t$, can not be delayed to be assigned after time slot $t+D$. Hence we do not optimize over more than $D+1$ slots.  We simplify the online optimization by solving only for $m_t$ and determine $x_t$ by making $x_t = m_t$ at time $t$. This makes the online algorithm not to waste any execution capacity that cannot be used later for executing workload. But the cost due to switching in the online algorithm may be higher than the offline algorithm. Thus our goal is to design strategies to reduce the switching cost. In the online algorithm, we reduce the switching cost by optimizing the total cost for the interval $[t,t+D]$.

When the deadline is uniform, we can reduce the switching cost even more by looking beyond $D$ slots. We do that by accumulating some workload from periods of high loads and execute that amount of workload later in valleys without violating constraints (C1) and (C2). Note that by accumulation we do not violate deadline as at each slot, we execute a portion of the accumulated workload by swapping with the newly released workload by EDF policy. To determine the amount of accumulation and execution we use `$\delta$-delayed workload'. Thus the online algorithm namely Valley Filling with Workload (VFW($\delta$)) looks ahead $\delta$ slots to determine the amount of execution. Let $l^{\delta}_t$ be the {\it $\delta$-delayed curve} with delay of $\delta$ slots for $0<\delta <D$.
\begin{equation*}
l^\delta_t =
\begin{cases}
0 & \text{if } t \le \delta,\\
L_{t-\delta} & \text{otherwise.}
\end{cases}
\end{equation*}
Then we can call the deadline curve as {\it $D$-delayed curve} and represent it by $l^D_t$. We determine the amount of accumulation and execution by controlling the set of feasible choices for $m_t$ in the optimization. For this purpose, we use the $\delta$-delayed curve to restrict the amount of accumulation. By having a lower bound on $m_t$ for the valley (low workload) and an upper bound it for the high workload, we control the execution in the valley and accumulation in the other parts of the curve. Thus in the online algorithm, we have two types of optimizations: Local Optimization and Valley Optimization. Local Optimization is used to smooth the `wrinkles' (we define {\it wrinkles} as the small variation in the workload in adjacent slots e.g. see Figure~\ref{fig:valleyopt}) within $D$ consecutive slots and accumulate some workload. On the other hand, Valley Optimization fills the valleys with the accumulated workload.

\begin{figure}[!t]
\begin{center}
\includegraphics[width=3.3in]{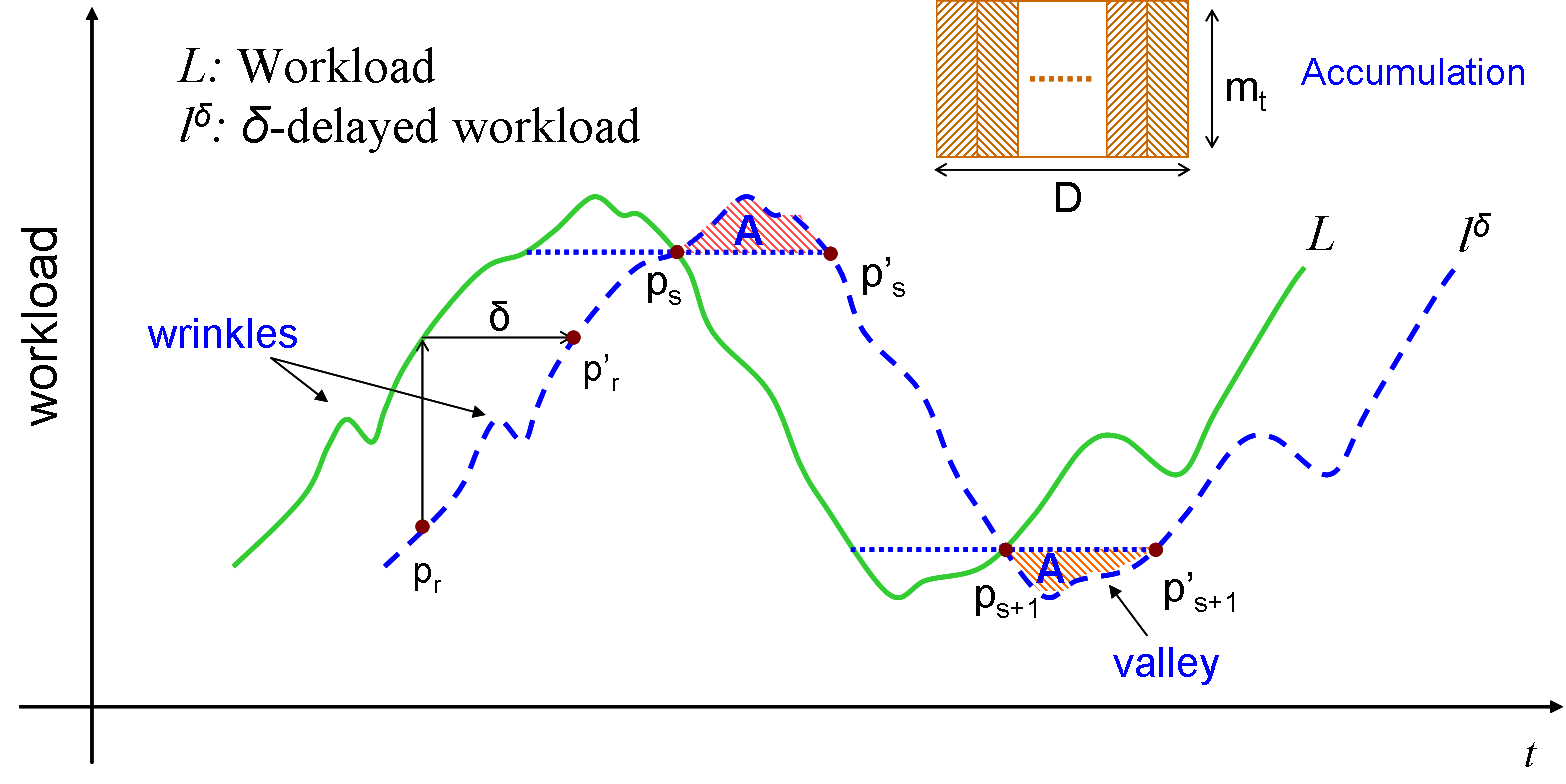}
\caption{The curves $L_t$ and $l^\delta_t$ and their intersection points. The peak from the $l^\delta_t$ curve is cut and used to fill the valley of the same curve. The amount of workload that is accumulated/delayed is bounded by $m_tD$.}
\label{fig:valleyopt}
\end{center}
\end{figure}

\subsection{Local Optimization}
The {\it local optimization} applies optimization over future $D$ slots and finds the optimum capacity for current slot by executing no more than $\delta$-delayed workload. Let $t$ be the current time slot. At this slot we apply a slightly modified version of offline optimization (\ref{equn:opt3}) in the interval $[t, t+D]$. We apply the following optimization LOPT($l_t$, $l^\delta_t$, $m_{t-1}$, $M$) to determine $m_t$ in order to smooth the wrinkles by optimizing over $D$ consecutive slots. We restrict the amount of execution to be no more than the $\delta$-delayed workload while satisfying the deadline constraint (C1).
\begin{IEEEeqnarray}{cLl}
\label{lopt}
 \text{min}_{m_t}\quad & (e_0 + e_1)\sum_{j=t}^{t+D}  m_j + \beta \sum_{j=t}^{t+D} |m_j-m_{j-1}|\\
 \text{subject to}\quad &\sum_{j=1}^{t} l^D_j\le \sum_{j=1}^{t} m_j     \nonumber\\
 &  \sum_{j=1}^{t+D} m_j = \sum_{j=1}^{t} l^\delta_j  \nonumber\\
 &  0\le m_k \le M  \qquad \qquad \qquad t\le k \le t+D\nonumber
\end{IEEEeqnarray}
After solving the local optimization, we get the value of $m_{t}$ for the current time slot and assign $x_{t}=m_{t}$.  For the next time slot $t+1$ we solve the local optimization again to find the values for $x_{t+1}$ and $m_{t+1}$. Note that the deadline constraint (C1) and the release constraint (C2) are satisfied at time $t$, since from the formulation $\sum_{j=1}^{t} l^D_j\le \sum_{j=1}^{t} m_j \le \sum_{j=1}^{t} l^\delta_j \le \sum_{j=1}^{t} L_j$.

\subsection{Valley Optimization}
In {\it valley optimization}, the accumulated workload from the local optimization is executed in `global valleys'. Before giving the formulation for the valley optimization we need to detect a valley.

Let $p_1, p_2, \ldots, p_n$ be the sequence of intersection points of $L_t$ and $l^\delta_t$ curves (see Figure~\ref{fig:valleyopt}) in nondecreasing order of their x-coordinates ($t$ values).  Let $p'_1, p'_2, \ldots, p'_n$ be the sequence of points on $l^\delta_t$ with delay $\delta$ added with each intersection point $p_1, p_2, \ldots, p_n$ on $l^\delta_t$ such that $t'_s =t_s + \delta$ for all $1\le s\le n$. We discard  all the intersection points (if any) between $p_s$ and $p'_s$ from the sequence such that $t_{s+1} \ge t'_s$. Note that at each intersection point $p_s$, the curve from $p_s$ to $p'_s$ is known. To determine whether the curve $l^\delta_t$ between $p_s$ and $p'_s$ is a valley, we calculate the area $$A = \sum_{t=t_s}^{t'_s} (l^\delta_t-l^\delta_{t_s})$$
If $A$ is negative, then we regard the curve between $p_s$ and $p'_s$ as a {\it global valley} though it may contain several small peaks and valleys. If the curve between $p_s$ and $p'_s$ is a global valley, we fill the valley with some (possibly all) of the accumulated workload by executing more than the $\delta$-delayed workload while satisfying the release constraint (C2). For each $t$, we apply the following optimization  VOPT($l_t$, $L_t$, $m_{t-1}$, $M$) in the interval $[t,t+D]$ to find the value of $m_t$ where $t_s\le t\le t'_s$.
\begin{IEEEeqnarray}{cLl}
\label{equn:gopt}
 \text{min}_{m_t}\quad & (e_0 + e_1) \sum_{j=t}^{t+D}  m_j + \beta \sum_{j=t}^{t+D} |m_j-m_{j-1}|\\
 \text{subject to}\quad & \sum_{j=1}^{t} l^D_j\le \sum_{j=1}^{t} m_j \quad \quad\quad\quad\quad\quad \quad   \nonumber\\
 &  \sum_{j=1}^{t+D} m_j = \sum_{j=1}^{t} L_j  \nonumber\\
 &  0\le m_k \le M  \quad \quad\quad\quad\quad\quad \quad t\le k \le t+D\nonumber
\end{IEEEeqnarray}
Note that the deadline constraint (C1) and the release constraint (C2) are satisfied at time $t$, since $\sum_{j=1}^{t} l^D_j\le \sum_{j=1}^{t} m_j$ $\le \sum_{j=1}^{t} L_j$. We apply the valley optimization (\ref{equn:gopt}) for  each $t_s\le t\le t'_s$ and local optimization (\ref{lopt}) for each time slot $t$ where $t \in \{[1,T-D-1] - [t_s, t'_s]\}$ for all $t_s$. For each $t\in[T-D,T]$ we apply the valley optimization (\ref{equn:gopt}) for global valley in the interval $[t,T]$ in order to execute all the accumulated workload. Algorithm~\ref{VFW} summarizes the procedures for VFW($\delta$). For each new time slot $t$, Algorithm~\ref{VFW} detects a valley by checking whether the curves $l^\delta_t$ and $L_t$ intersects. If $t$ is inside a valley, Algorithm~\ref{VFW} applies valley optimization (VOPT); local optimization (LOPT), otherwise. Figure~\ref{fig:solutionrandom}(b) illustrates the nature of solutions from VFW($\delta$) for $x_t$ and $m_t$. Note that $\delta$ is a parameter for the online algorithm VFW($\delta$).

\begin{algorithm}
\caption{VFW($\delta$)}
\label{VFW}
{\small{
\begin{algorithmic}[1]
\STATE $valley \leftarrow 0$; $m_0 \leftarrow 0$
\STATE $l^D[1:D] \leftarrow 0$; $l^\delta[1:\delta] \leftarrow 0$
\FOR {{\bf each} new time slot $t$}
\STATE $l^D[t+D] \leftarrow L[t]$
\STATE $l^\delta[t+\delta] \leftarrow L[t]$
\IF {$valley = 0$ and $l^\delta$ intersects $L$ }
\STATE Calculate Area $A = \sum_{t=t_s}^{t'_s} (l^\delta_t-l^\delta_{t_s})$
\IF {$A<0$}
\STATE $valley \leftarrow 1$
\ENDIF
\ELSIF {$valley > 0$ and $valley \le \delta$}
\STATE $valley \leftarrow valley +1$
\ELSE
\STATE $valley \leftarrow 0$
\ENDIF
\IF{$valley = 0$}
\STATE $m[t:t+D]$ $\leftarrow$ LOPT($l[1:t]$,$l^\delta[1:t]$,$m_{t-1}$,$M$)
\ELSE
\STATE $m[t:t+D]$ $\leftarrow$ VOPT($l[1:t]$,$L[1:t]$,$m_{t-1}$,$M$)
\ENDIF
\STATE $x_t \leftarrow m_t$
\ENDFOR
\end{algorithmic}
}}
\end{algorithm}

\subsection{Analysis of the Algorithm}
We first prove the feasibility of the solutions from the VFW($\delta$) algorithm and then analyze the competitive ratio of this algorithm with respect to the offline formulation~(\ref{equn:opt3}). First, we have the following theorem about the feasibility.

\begin{theorem}
\label{theorem:online3}
The VFW($\delta$) algorithm gives feasible solution for any $0<\delta<D$.
\end{theorem}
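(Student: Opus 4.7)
The plan is to show three things at once: (i) each invocation of LOPT or VOPT is LP-feasible so the algorithm is well-defined; (ii) the committed values $m_t$ (and hence $x_t = m_t$) satisfy the deadline constraint (C1), the release constraint (C2), and the capacity bound $0 \le m_t \le M$; and (iii) by time $T$ all released workload has been executed, i.e.\ $\sum_{t=1}^{T} x_t = \sum_{t=1}^{T} L_t$. I would establish all three jointly by induction on $t$, carrying as inductive hypothesis the cumulative inequalities $\sum_{j=1}^{s} l^D_j \le \sum_{j=1}^{s} m_j \le \sum_{j=1}^{s} L_j$ for every $s \le t-1$.

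For the inductive step I would split on which branch the algorithm enters at time $t$. In the non-valley branch, LOPT is invoked. Since $m_{t+1},\ldots,m_{t+D}$ are unconstrained free variables and only $m_t$ actually gets committed, feasibility of the LP reduces to showing that the mandatory load to be scheduled in $[t,t+D]$, namely $A := \sum_{j=1}^{t} l^\delta_j - \sum_{j=1}^{t-1} m_j$, lies in $[0,(D+1)M]$, and that the deadline lower bound $m_t \ge \sum_{j=1}^{t} l^D_j - \sum_{j=1}^{t-1} m_j$ is at most $M$. The latter follows from the inductive hypothesis, since the right side is at most $l^D_t = L_{t-D} \le M$. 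Non-negativity of $A$ needs the comparison $\sum_{j=1}^{t-1} m_j \le \sum_{j=1}^{t} l^\delta_j$; this is clear whenever the previous slot also ran LOPT (the LOPT equality gives $\sum_{j=1}^{t-1} m_j \le \sum_{j=1}^{t-1} l^\delta_j$), and the crucial case is handled by the transition argument below. The upper bound on $A$ follows from $l^\delta_j \le M$ combined with the previous LOPT equality. In the valley branch, VOPT replaces $l^\delta$ by $L$ in the equality; its feasibility follows analogously, using $L_j \le M$ and the inductive version of (C2) in place of the $l^\delta$ comparison.

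The step I expect to be the real obstacle is the mode transition, i.e.\ the first LOPT call following a VOPT episode. After VOPT has filled the valley it may have committed $\sum_{j=1}^{t-1} m_j$ strictly larger than $\sum_{j=1}^{t-1} l^\delta_j$, which a priori threatens non-negativity of $A$ for the next LOPT. I would discharge this using the valley-detection condition $A = \sum_{t=t_s}^{t'_s}(l^\delta_t - l^\delta_{t_s}) < 0$ together with the fact that a valley persists for exactly $\delta$ slots (lines~11–12 of Algorithm~\ref{VFW}): the total extra mass executed during the valley is bounded by the area between $L$ and $l^\delta$ over $[t_s,t_s+\delta]$, which is precisely absorbed by the $\delta$-slot shift built into $l^\delta$. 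Concretely I would show that immediately after the valley, $\sum_{j=1}^{t-1} m_j$ and $\sum_{j=1}^{t} l^\delta_j$ reconverge (both equal the same cumulative $L$-mass), restoring the LOPT invariant.

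Finally, for the tail $t \in [T-D,T]$ the algorithm forces a terminal valley optimization over $[t,T]$; the VOPT equality $\sum_{j=1}^{T} m_j = \sum_{j=1}^{T} L_j$ then directly yields claim (iii). Combining the inductive invariants $\sum l^D_j \le \sum m_j \le \sum L_j$ with this terminal equality gives exactly (C1), (C2), $0 \le m_t \le M$, and full execution of the workload, which is the stated feasibility.
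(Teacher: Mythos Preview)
Your plan is considerably more careful than the paper's own proof. The paper gives a three-line induction: it observes that $\sum_{j=1}^{k} l^D_j \le \sum_{j=1}^{k} l^\delta_j \le \sum_{j=1}^{k} L_j$ holds for all $k$ because $0<\delta<D$, uses this to assert the base case is feasible, and for the step simply notes that if the constraints hold at time $t$ then the monotone chain $\sum_{j=1}^{t} l^D_j \le \sum_{j=1}^{t+1} l^D_j \le \sum_{j=1}^{t} l^\delta_j \le \sum_{j=1}^{t+1} l^\delta_j \le \sum_{j=1}^{t} L_j \le \sum_{j=1}^{t+1} L_j$ guarantees a feasible $m_{t+1}$. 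It does not separate the LOPT and VOPT branches, does not verify the LP in each mode has a nonempty feasible set, and does not address the mode transition you singled out.

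The transition issue you flagged---that after a VOPT episode one might have $\sum_{j\le t-1} m_j > \sum_{j\le t} l^\delta_j$, rendering the next LOPT equality infeasible---is a genuine gap in the paper's argument, not a difficulty the paper resolves by some other means. Your sketch for discharging it (bounding the extra mass executed in the valley by the $L$--$l^\delta$ area and using the $\delta$-slot shift so the cumulative sums ``reconverge'') is the right idea but not yet a proof: from VOPT at the last valley slot $t'_s$ you only get $\sum_{j\le t'_s} m_j \le \sum_{j\le t'_s} L_j$, whereas what the next LOPT needs is $\sum_{j\le t'_s} m_j \le \sum_{j\le t'_s+1} l^\delta_j = \sum_{j\le t_s+1} L_j$, and bridging that gap requires actually using the negative-area valley condition quantitatively, not just invoking it. If you can make that estimate precise your proof would be strictly more complete than the paper's; as written, both arguments leave this point unclosed, but you at least identify it.
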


\begin{proof}
We prove this theorem inductively by showing that the choice of any feasible $m_t$ from an optimization applied in the interval $[t,t+D]$ do not result in infeasibility in the optimization applied in the next time slot $[t+1,t+D+1]$. Initially, the optimization in VFW($\delta$) is applied for the interval $[1,D+1]$ with $\sum_{j=1}^{k} l^D_j = 0$ for $1\le k \le D$. Hence the optimization applied in the intervals $[1,D+1]$ gives feasible  $m_1$ because $\sum_{j=1}^{k} l^D_j \le \sum_{j=1}^{k} l^\delta_j \le \sum_{j=1}^{k} L_j$ for $1\le k \le D$.

Now suppose the VFW($\delta$) gives feasible $m_t$ in an interval $[t,t+D]$. We have to prove that there exists feasible choice for $m_t$ for the optimization applied at $[t+1,t+D+1]$. The deadline constraint (C1) and the release constraint (C2) are satisfied for $m_t$. Hence, $\sum_{j=1}^{t} l^D_j \le \sum_{j=1}^{t} l^\delta_j \le \sum_{j=1}^{t} L_j$. Since $0<\delta<D$, $\sum_{j=1}^{t} l^D_j \le \sum_{j=1}^{t+1} l^D_j \le \sum_{j=1}^{t} l^\delta_j \le \sum_{j=1}^{t+1} l^\delta_j \le \sum_{j=1}^{t} L_j\le \sum_{j=1}^{t+1} L_j$. Thus for any feasible choice of $m_t$, we can always obtain feasible solution for $m_{t+1}$ such that the above inequality holds.
\end{proof}

We now analyze the competitive ratio of the online algorithm with respect to the offline formulation~(\ref{equn:opt3}). We denote the operating cost of the solution vectors $X = (x_1,x_2,\ldots,x_T)$ and $M = (m_1,m_2,\ldots,m_T)$ by $cost_o(X,M)=\sum_{t=1}^T m_t \cdot$ $C(x_t/m_t)$, switching cost by $cost_s(X,M) = \beta \sum_{t=1}^T |m_t-m_{t-1}|$ and total cost by $cost(X,M) = cost_o(X,M)$ $+$ $cost_s(X,M)$. We have the following lemma.

\begin{lemma}
\label{lemma:online24}
$cost_s(X,M) \le 2\beta\sum_{t=1}^{T} m_t$
\end{lemma}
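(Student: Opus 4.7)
The plan is to prove the inequality by a direct term-by-term estimate of $|m_t - m_{t-1}|$, then sum and collect.

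First, I would note that since $m_t \ge 0$ for all $t$, the triangle inequality gives the crude but tight-enough bound $|m_t - m_{t-1}| \le m_t + m_{t-1}$ for every $t \ge 1$. This is the only nontrivial inequality in the argument; everything else is bookkeeping.

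Summing this bound from $t=1$ to $T$, the right-hand side becomes $\sum_{t=1}^T m_t + \sum_{t=1}^T m_{t-1} = \sum_{t=1}^T m_t + \sum_{t=0}^{T-1} m_t$. Re-indexing the second sum and using the initial condition $m_0 = 0$ (set in line 1 of Algorithm~\ref{VFW}), the total telescopes to $2\sum_{t=1}^T m_t - m_T$, which is at most $2\sum_{t=1}^T m_t$ since $m_T \ge 0$. Multiplying by $\beta$ yields $cost_s(X,M) \le 2\beta \sum_{t=1}^T m_t$, as desired.

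There is essentially no obstacle here: the lemma is a one-line consequence of the triangle inequality plus non-negativity of $m_t$ and the boundary value $m_0 = 0$. The only point worth being explicit about is the initial condition $m_0 = 0$, without which we would pick up an additive term $\beta m_0$ on the right-hand side. Since this boundary value is fixed by the algorithm (and is also the natural choice for the offline model, where the data center starts with no active servers), the stated bound holds cleanly.
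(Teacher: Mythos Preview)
Your proof is correct and follows essentially the same argument as the paper: bound $|m_t - m_{t-1}| \le m_t + m_{t-1}$ using non-negativity, sum, and use $m_0 = 0$. The paper's version is slightly terser (it does not separately track the $-m_T$ term), but the reasoning is identical.
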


\begin{proof}
Switching cost at time $t$ is $S_t = \beta|m_t-m_{t-1}|\le \beta(m_t + m_{t-1})$, since $m_t\ge 0$. Then $cost_s(X,M)\le \beta \cdot$ $\sum_{t=1}^T (m_t + m_{t-1}) \le  2\beta \sum_{t=1}^T m_t$ where $m_0 = 0$.
\end{proof}

Let $X^*$ and $M^*$ be the offline solution vectors from optimization (\ref{equn:opt3}). The following theorem proves that the competitive ratio of the VFW($\delta$) algorithm is bounded by a constant with respect to the offline formulation~(\ref{equn:opt3}).

\begin{theorem}
\label{theorem:online1}
$cost(X,M) \le \frac{e_0+e_1+2\beta}{e_0 + e_1}cost(X^*,M^*)$.
\end{theorem}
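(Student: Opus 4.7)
The plan is to bound the operating cost and the switching cost of the online solution separately against $cost(X^*,M^*)$ and then add. The key observation that makes this tractable is the design choice $x_t=m_t$ in VFW($\delta$): the algorithm never keeps idle capacity, so $m_t$ is tightly linked to the executed workload rather than only a loose upper bound.

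First I would unpack the operating cost. Since $C(u)=e_0+e_1u$ and $x_t=m_t$, we get
\begin{equation*}
cost_o(X,M)=\sum_{t=1}^{T} m_t\,C(x_t/m_t)=\sum_{t=1}^{T}(e_0 m_t+e_1 x_t)=(e_0+e_1)\sum_{t=1}^{T} m_t.
\end{equation*}
Because VFW($\delta$) is feasible (Theorem~\ref{theorem:online3}) and every released job is eventually run, $\sum_{t=1}^{T} m_t=\sum_{t=1}^{T} x_t=\sum_{t=1}^{T} L_t$. For the offline optimum, the constraints of~(\ref{equn:opt3}) force $m_t^*\ge x_t^*$ at every $t$ and $\sum_t x_t^*=\sum_t L_t$, hence
\begin{equation*}
cost_o(X^*,M^*)=\sum_{t=1}^{T}(e_0 m_t^*+e_1 x_t^*)\ge (e_0+e_1)\sum_{t=1}^{T} L_t=cost_o(X,M).
\end{equation*}

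Next I would handle the switching term via Lemma~\ref{lemma:online24}, which gives $cost_s(X,M)\le 2\beta\sum_{t=1}^{T} m_t=2\beta\sum_{t=1}^{T} L_t$. Rewriting $\sum_{t=1}^{T} L_t$ using the operating-cost identity above, $\sum_{t=1}^{T} L_t=cost_o(X,M)/(e_0+e_1)\le cost_o(X^*,M^*)/(e_0+e_1)\le cost(X^*,M^*)/(e_0+e_1)$, so
\begin{equation*}
cost_s(X,M)\le \frac{2\beta}{e_0+e_1}\,cost(X^*,M^*).
\end{equation*}

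Finally I would add the two bounds. Using $cost_o(X,M)\le cost_o(X^*,M^*)\le cost(X^*,M^*)$ together with the switching bound,
\begin{equation*}
cost(X,M)=cost_o(X,M)+cost_s(X,M)\le\left(1+\frac{2\beta}{e_0+e_1}\right)cost(X^*,M^*)=\frac{e_0+e_1+2\beta}{e_0+e_1}cost(X^*,M^*),
\end{equation*}
which is exactly the claimed bound. The only subtle step, and the one I would double-check carefully, is the identity $\sum_t m_t=\sum_t L_t$ for the online algorithm: it relies on $x_t=m_t$ at every slot (guaranteed by the algorithm) plus the fact that feasibility plus the terminal constraint $\sum_{j=1}^{T} x_j=\sum_{j=1}^{T} L_j$ are maintained by VOPT in the final window $[T-D,T]$. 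Everything else is algebraic manipulation of the affine cost model combined with Lemma~\ref{lemma:online24}.
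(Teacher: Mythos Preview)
Your argument is correct and follows essentially the same route as the paper: lower-bound $cost(X^*,M^*)\ge(e_0+e_1)\sum_t L_t$ via $m_t^*\ge x_t^*$ and $\sum_t x_t^*=\sum_t L_t$, upper-bound the online cost by $(e_0+e_1+2\beta)\sum_t m_t$ using $x_t=m_t$ and Lemma~\ref{lemma:online24}, and then compare. One small point: the paper works only with the inequality $\sum_{t} m_t\le\sum_{t} L_t$ (which follows directly from the release constraint~(C2) at $t=T$) rather than the equality you invoke; since your bounds on both $cost_o(X,M)$ and $cost_s(X,M)$ actually only need $\sum_t m_t\le\sum_t L_t$, you can drop the equality claim and avoid the terminal-window subtlety you flagged.
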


\begin{proof}
Since the offline optimization assigns all the workload in the $[1,T]$ interval, $\sum_{t=1}^T x^*_t = \sum_{t=1}^T L_t \le \sum_{t=1}^T m^*_t$, where we used $x^*_t\le m^*_t$ for all $t$. Hence $cost(X^*,M^*) \ge cost_o(X^*,M^*) = \sum_{t=1}^T m^*_tC(x^*_t/m^*_t) = \sum_{t=1}^T (e_0 m^*_t + e_1 x^*_t) \ge  \sum_{t=1}^T (e_0 + e_1) L_t$.

In the online algorithm, we set $x_t = m_t$ and $\sum_{j=1}^t m_j$ $\le \sum_{j=1}^t L_j$ for all $t\in[1,T]$. Hence by lemma~\ref{lemma:online24}, we have $cost(X,M) = cost_o(X,M)+ cost_s(X,M) \le \sum_{t=1}^T (e_0 + e_1) m_t + 2\beta\sum_{t=1}^{T} m_t  \le (e_0+ e_1) \sum_{t=1}^T  L_t + 2\beta\sum_{t=1}^T L_t = (e_0 + e_1 +2\beta)\sum_{t=1}^T L_t$.
\end{proof}


\section{Generalized Capacity Provisioning}
We now consider the general case where the deadline requirements are not same for all the jobs in a workload. Let $\nu$ be the maximum possible deadline.
We decompose the workload according to their associated deadline. Suppose $L_{d,t}\ge 0$ be the portion of the workload released at time $t$ and has deadline $d$ for $0\le d\le \nu$. We have $\sum_{d=0}^{\nu} L_{d,t} = L_t$.

The workload to be executed at any time slot $t$ can come from different previous slots $t-d$ where $0\le d\le \nu$ as illustrated in Figure~\ref{fig:fine-grained}(a). Hence we redefine the deadline curve $l_t$ and represent it by $l'_t$. Assuming $L_{d,t} = 0$ if $t\le 0$, we define $l'_t = \sum_{d=0}^\nu L_{d,(t-d)}$. Then the offline formulation remains the same as formulation (\ref{equn:opt3}) with the deadline curve $l_t$ replaced by $l'_t$.
\begin{IEEEeqnarray}{cLl}
\label{equn:finegrained}
 \text{min}_{x_t,m_t}\quad & \sum_{t=1}^T  m_tC(x_t/m_t) + \beta \sum_{t=1}^T |m_t-m_{t-1}| &\\
 \text{subj. to}\quad & \sum_{j=1}^{t} l'_j \le \sum_{j=1}^{t} x_j \le \sum_{j=1}^{t} L_j  & \forall t  \nonumber\\
 &  \sum_{j=1}^{T} x_j = \sum_{j=1}^{T} L_j  &  \nonumber\\
 &  0\le x_t \le m_t \le M  &\forall t  \nonumber
\end{IEEEeqnarray}

We now consider the online case. Delaying the workload up to their maximum deadline may increase the switching cost since it may increase the variation in the workload compared to the original workload (see Figure~\ref{fig:fine-grained}(b)). Hence at each time we need to determine the optimum assignment and capacity that reduces the switching cost from the original workload while satisfying each individual deadline. We can apply the VFW($\delta$) algorithm from the previous section with $D=D_{min}$ where $D_{min}$ is the minimum deadline for the workload. If $D_{min}$ is small, VFW($\delta$) does not work well because $\delta<D_{min}$ becomes too small to detect a valley. Hence we use a novel approach for distributing the workload $L_t$ over the $D_t$ slots such that the change in the capacity between adjacent time slots is minimal (see Figure~\ref{fig:fine-grained}(c)). We call this algorithm as Generalized Capacity Provisioning (GCP) algorithm.

\begin{figure}[!t]
\begin{center}
\includegraphics[width=3.4in]{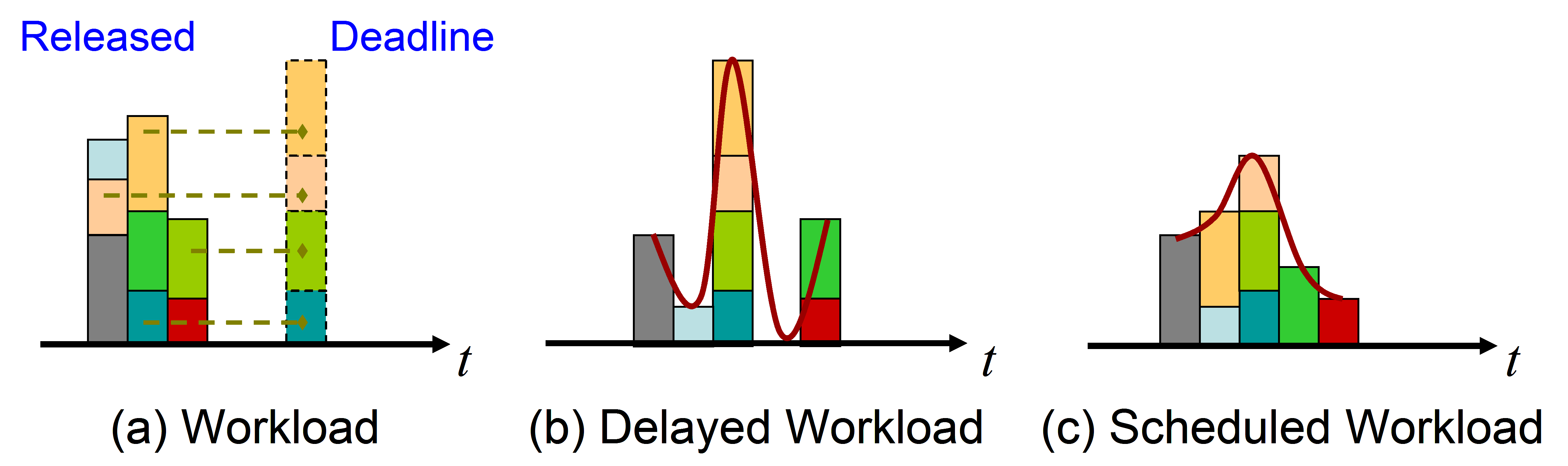}
\caption{Illustration of workload with different deadline requirements. (a) workload released at different times have different deadlines, (b) the delayed workload $l'_t$, may increase the switching cost due to large variation, (c) distribution of workload in adjacent slots by GCP to reduce the variation in workload.}
\label{fig:fine-grained}
\end{center}
\end{figure}

In the GCP algorithm, we apply optimization to determine $m_t$ at each time slot $t$ and make $x_t = m_t$. The optimization is applied over the interval $[t,t+\nu]$ since at time slot $t$ we can have workload that has deadline up to $t+\nu$ slots. Hence at each time $t$, the released workload is a vector of $\nu+1$ dimension. Let, ${\bf L_t} = (L_{0,t},L_{1,t},\ldots, L_{\nu, t})$ where $L_{d,t} = 0$ if there is no workload with deadline $d$ at time $t$. Let ${\bf y_t}$ be the vector of unassigned workload released up to time $t$. The vector ${\bf y_t}$ is updated from ${\bf y_{t-1}}$ at each time slot by subtracting the capacity $m_{t-1}$ and then adding ${\bf L_t}$.  Note that $m_{t-1}$ is subtracted from the vector  ${\bf y_{t-1}}$ in order to use unused capacity  to execute already released workload at time $t-1$ by following EDF policy (see lines 4-17 in Algorithm~\ref{GCP}).  Let ${\bf y'_{t-1}} = (y'_{0,t-1},y'_{1,t-1},y'_{2,t-1},\ldots, y'_{\nu, t-1})$ be the vector after subtracting $m_{t-1}$ with $y'_{0,t-1} =0$ and $y'_{j,t-1}\ge 0$ for $1\le j\le \nu$. Then ${\bf y_t} = {\bf L_t} + (y'_{1,t-1},y'_{2,t-1},\ldots, y'_{\nu, t-1},0)$ where ${\bf y_t} = (0, 0,\ldots, 0)$ if $t<=0$. Then the optimization GCP-OPT(${\bf y_t}$, $m_{t-1}$, $M$) applied at each $t$  over the interval $[t,t+\nu]$ is as follows:
\begin{IEEEeqnarray*}{cll}
\label{equn:fine-grained}
 \text{min}_{m_t} & (e_0 + e_1)\sum_{j=t}^{t+\nu} m_{j} + \beta \sum_{j=t}^{t+\nu} |m_{j}-m_{j-1}|& \quad \quad \IEEEyessubnumber\\
 \text{subject to} & \quad  \sum_{j=0}^{\nu} m_{t+j} = \sum_{j=0}^{\nu} y_{j,t}  & \IEEEyessubnumber \\
& \quad \sum_{k=0}^{j} m_{t+k}  \ge \sum_{k=0}^{j} y_{k,t}  \qquad 0\le j\le \nu-1 & \IEEEyessubnumber\\
&  \quad 0\le m_{t+j} \le M  \qquad \qquad 0\le j \le \nu & \IEEEyessubnumber
\end{IEEEeqnarray*}

Note that the optimization (\ref{equn:fine-grained}) solves for $\nu+1$ values. We only use $m_t$ as the capacity and assignment of workload at time $t$. Algorithm~\ref{GCP} summarizes the procedures for GCP. The GCP algorithm gives feasible solutions because it works with the unassigned workload and constraint (7c) ensures deadline constraint (C1) and constraint (7b) ensures the release constraint (C2). The competitive ratio for the GCP algorithm is same as the competitive ratio for VFW($\delta$) because in GCP, $m_t=x_t$ and release constraint (C2) holds at every $t$ making $\sum_{t=1}^T m_t = \sum_{t=1}^T x_t \le \sum_{t=1}^T L_t$.

\begin{algorithm}[!t]
\caption{GCP}
\label{GCP}
{\small{
\begin{algorithmic}[1]
\STATE $y[0:\nu] \leftarrow 0$
\STATE $m_0 \leftarrow 0$
\FOR {{\bf each} new time slot $t$}
\STATE $uc \leftarrow m_{t-1}$          \COMMENT{$uc$ represents the unused capacity}
\FOR {$i=0$ to $\nu$}
\IF {$uc \le 0$ }
\STATE $y'[i] \leftarrow y[i]$
\ELSE
\STATE $uc \leftarrow uc - y[i]$
\IF {$uc \le 0$ }
\STATE $y'[i] \leftarrow -uc$
\ELSE
\STATE $y'[i] \leftarrow 0$
\ENDIF
\ENDIF
\ENDFOR
\STATE $y[0:\nu] = \{y'[1:\nu],0\} + L_t[0:\nu]$
\STATE $m[t:t+\nu]$ $\leftarrow$ GCP-OPT($y[0:\nu]$, $m_{t-1}$, $M$)
\STATE $x_t \leftarrow m_t$
\ENDFOR
\end{algorithmic}
}}
\end{algorithm}

\section{Extension for Long Jobs}
In this section, we extend our model for `Long Jobs'. By {\it long jobs} we mean the jobs that have length greater than the time slot length $\tau$ and each of which has a given deadline requirement greater than its length. To extend the model for long jobs we estimate the length of the long jobs, decompose the long jobs into small pieces ($\le\tau$) and assign deadline to each of them.


\subsection{Estimation of Execution Time}
If all the jobs are short interactive jobs with execution time less than the time-slot length, then we do not need to estimate the completion time of the jobs. But for a mix of short and long jobs we need estimation.

Since our targeted workload is MapReduce, we present a method for estimating execution time of a MapReduce job.  The MapReduce performance model is illustrated in Figure~\ref{fig:mapreduce}. A MapReduce job $J$ is defined in \cite{Estimation_MapReduce} as a 7 tuple $(S,S',S'',X,Y,f(x),g(x))$, where $S$ is the size of map input data; $S'$ is the size of intermediate shuffle data; $S''$ is the size of output reduce data; $X$ is the number of mappers that $J$ is divided into; $Y$ is the number of reducers assigned for $J$ to output; $f(x)$ is the running time of a mapper with input size $x$; $g(x)$ is the running time of a reducer with input size $x$.  We compute the number of map and reduce tasks by dividing the input size $S$ and output size $S''$ by the HDFS (Hadoop Distributed File System) block size respectively. Let $V_i$, $V_o$ and $V_n$ be the data read rate, data output rate and network transfer rate respectively. Then the execution times of map, shuffle and reduce tasks can be estimated by the following equations \cite{Estimation_MapReduce}.
\begin{IEEEeqnarray*}{ll}
\mathcal{T}_{m} = \frac{S}{XV_i} + f(\frac{S}{X}) + \frac{S'}{XV_o}&\\
\mathcal{T}_{s} = \frac{S'}{XYV_n}&\\
\mathcal{T}_{r} = g(\frac{S'}{Y}) + \frac{S''}{YV_o}&
\end{IEEEeqnarray*}

\begin{figure}[!t]
\begin{center}
\includegraphics[width=3.3in]{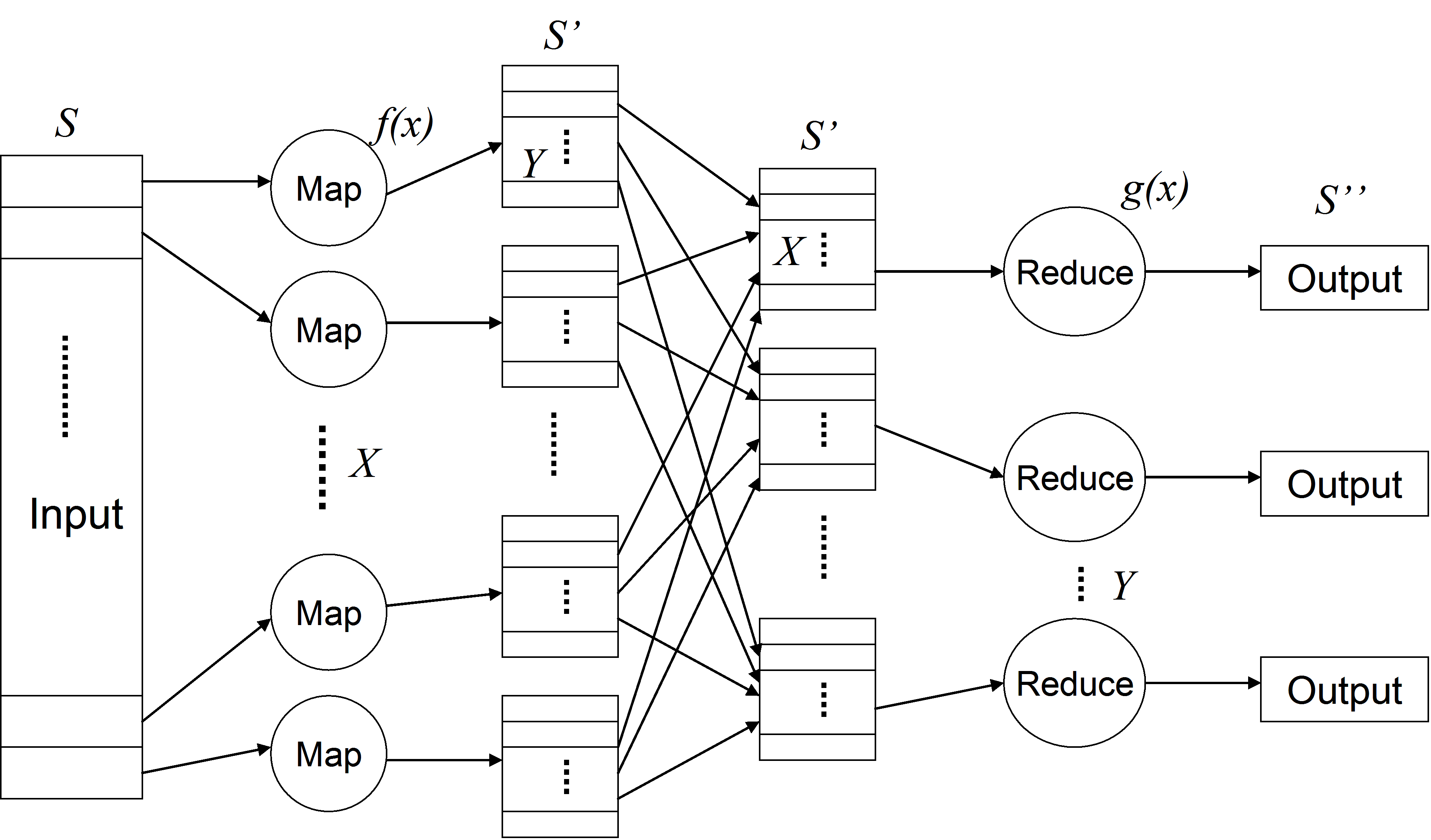}
\caption{MapReduce Performance Model.}
\label{fig:mapreduce}
\end{center}
\end{figure}

The map and reduce tasks run in parallel and may need several rounds to finish if the total number of available task slots is less than the number of mappers or reducers. If at most $M$ mappers and $R$ reducers are completed at each round, then the number of rounds of map and reduce tasks are $\lambda_m = \lceil X/M\rceil$ and $\lambda_r = \lceil Y/R\rceil$ respectively. Based on whether the reducers have to wait for data transfer to complete, we have two cases. If $t_m\ge Mt_s$, reducers have to wait for data transfers to complete and when $t_m < Mt_s$, reducers do not need to wait for completion of mappers except during the first round. Then the estimated execution time for job $J$ is,
\begin{equation*}
\mathcal{T}_J =
\begin{cases}
\mathcal{T}_{m} + \lambda_r(X\mathcal{T}_{s} + \mathcal{T}_{r}), & \mathcal{T}_{m} < M \mathcal{T}_{s} \\
\lambda_m \mathcal{T}_{m} + M \mathcal{T}_{s} + \lambda_r ( X \mathcal{T}_{s} + \mathcal{T}_{r}), & \mathcal{T}_{m} \ge M\mathcal{T}_{s}
\end{cases}
\end{equation*}



\subsection{Decomposition and Deadline Assignment}
The long jobs can be preemptive and non-preemptive. Based on these two types we have two ways to decompose and assign deadlines to them.

\subsubsection*{Preemptive Jobs}
Let $J$ be a preemptive job with execution time $\mathcal{T}_J$ ($>\tau$), release time $t_J$ and deadline $D$ (in terms of number of slots). Then the length of the job $J$ is $\ell = \lceil \mathcal{T}_J/\tau \rceil$ time slots. Since the job is preemptive, we safely decompose it into small pieces $J_1, J_2, \ldots J_\ell$ where $\mathcal{T}_{J_i}\le \tau$ for $1\le i\le \ell$. We assign the deadline for each of the small pieces to $\lfloor D/\ell \rfloor-1$ (see Figure~\ref{fig:longjob}(a)). We set the release time of the first piece $t_{J_1} = t_J$ and the release times of the other pieces to $t_{J_i} = t_{J_{i-1}}+\lfloor D/\ell \rfloor$ for $1<i\le \ell$. Since a piece $J_{i}$ is released after its preceding piece $J_{i-1}$, this technique satisfies any precedence constraint the particular job may have.

\begin{figure}[!ht]
\centerline{\subfigure[Preemptive Jobs]{\includegraphics[width
=1.6in]{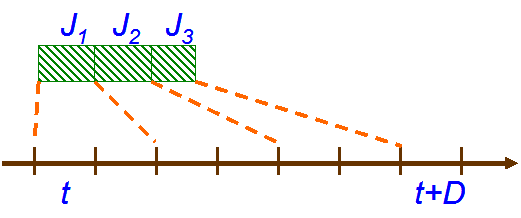}
}
\hfil
\subfigure[Non-preemptive Jobs]{\includegraphics[width=1.6in]{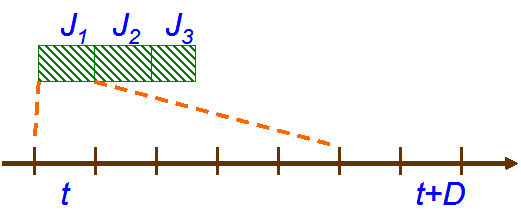}
}}
\caption{Release times and deadlines for pieces of Long Jobs.}
\label{fig:longjob}
\end{figure}






\subsubsection*{Non-preemptive Jobs}
To make scheduling decision for a job $J$, we consider it to be consisting of $J_1, J_2, \ldots J_\ell$ small pieces where $\mathcal{T}_{J_i}= \tau$ for $1\le i< \ell$ and $\mathcal{T}_{J_\ell}\le \tau$. We schedule the first chunk $J_1$ with deadline $D-\ell$ and suppose that our algorithm schedules it at time $t'_{J_1}$ (see Figure~\ref{fig:longjob}(b)). Then we assign the release time of the other pieces $J_i$ for $1<i\le \ell$ to be $t_{J_i} = t'_{J_{i-1}}+1$ and assign deadline $D= 0$ for each of them. Since the deadline is zero for the later pieces and they are released as soon as the previous piece finishes, the algorithm ensures enough capacity for the execution of the job without preemption.




\section{Simulation}
In this section, we evaluate the cost incurred by the VFW($\delta$) and GCP algorithm relative to optimal solution in the context of workload generated from realistic data. First, we motivate our evaluation by a detailed analysis of simulation results. Then in Section VII, we validate the simulation results by performing experiments on a Hadoop cluster.

\subsection{Simulation Setup}
We use realistic parameters in the simulation setup and provide conservative estimates of cost savings resulting from our proposed VFW($\delta$) and GCP algorithms.

\subsubsection*{Cost benchmark}
Currently data centers typically do not use dynamic capacity provisioning based on the variation of the workload \cite{9}. A naive approach for capacity provisioning is to follow the workload curve and determine the capacity and assignment of workload accordingly. Clearly it is not a good approach because for capacity provisioning it does not take into account the cost incurred due to switching. Yet this is a very conservative estimate as it does not waste any execution capacity and meets all the deadline. We compare the total cost from the VFW($\delta$) and GCP algorithms with the `follow the workload' ($x=m=L$) strategy and evaluate the cost reduction.

\subsubsection*{Cost function parameters}
The total cost is characterized by $e_0$ and $e_1$ for the operating cost and $\beta$ for the switching cost. In the operating cost, $e_0$ represents the proportion of the fixed cost and $e_1$ represents the load dependent energy consumption.  The energy consumption of the current servers is dominated by the fixed cost \cite{19}. Therefore we choose $e_0 = 1$ and $e_1 = 0$. The switching cost parameter $\beta$ represents the wear-and-tear due to changing power states in the servers. We choose $\beta =12$ for slot length of 5 minutes such that it works as an estimate of the time a server should be powered down (typically one hour \cite{9,21}) to outweigh the switching cost with respect to the operating cost.

\subsubsection*{Workload description}
We use two publicly available MapReduce traces as examples of dynamic workload. The MapReduce traces were released by Chen et al. \cite{n4} which are produced from real Facebook traces for one day (24 hours) from a cluster of 600 machines. We count the number of different types of job submissions over a time slot length of 5 minutes to find the released workload curve (Figure~\ref{fig:workload}). We then use the length estimation to find the actual workload curve showing active jobs over time and use that as a dynamic workload (Figure~\ref{fig:workload}) for simulation. The two samples we use represent strong diurnal properties and have variation from typical workload (Workload A) to bursty workload (Workload B). The released jobs will be delayed to reduce the variations in the active workload curve.

\begin{figure}[!t]
\centerline{\subfigure[Workload A]{\includegraphics[width
=1.7in]{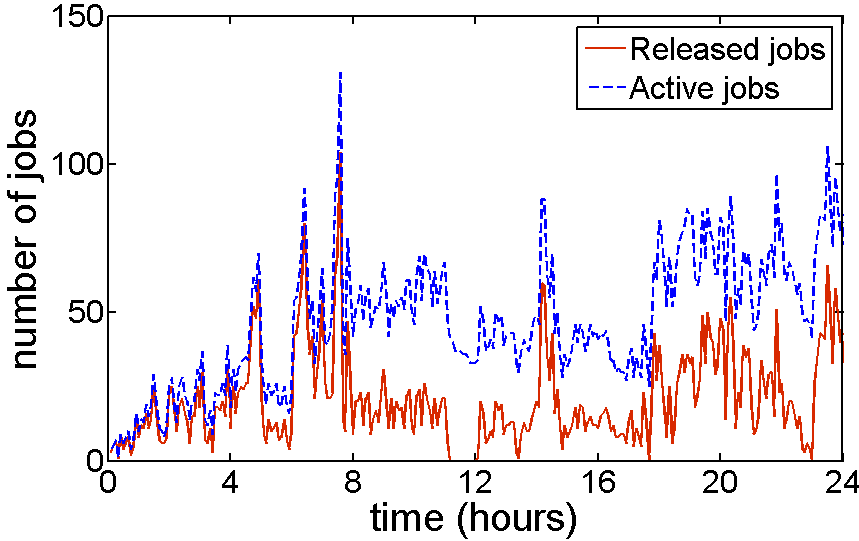}
}
\hfil
\subfigure[Workload B]{\includegraphics[width=1.7in]{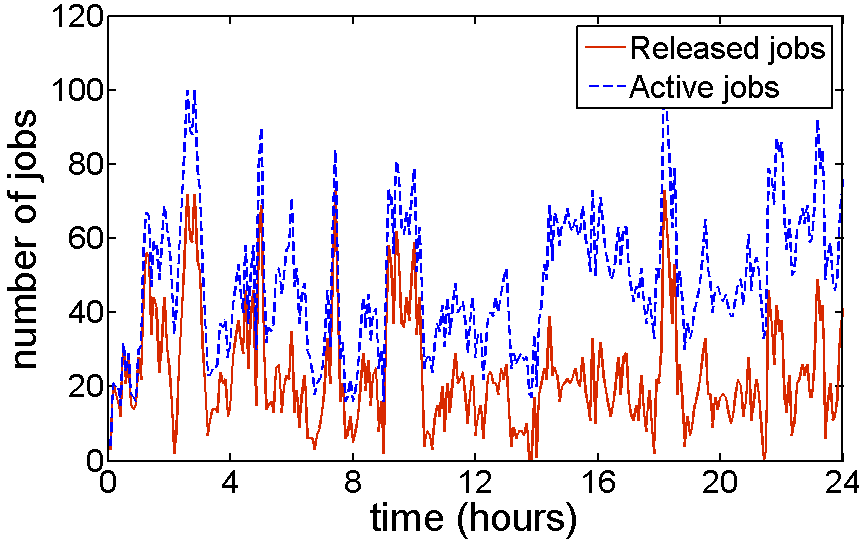}
}}
\caption{Illustration of the MapReduce traces as dynamic workload used in the experiments. The active jobs are shown with job length estimation.}
\label{fig:workload}
\end{figure}

\subsubsection*{Length estimation parameters}
Since we do not know any parameters from the Facebook cluster, we assume that the cluster has enough task slots so that all the map and reduce tasks are completed in the first round. Hence $\lambda_m= 1$ and $\lambda_r =1$. HDFS block size is typically 128MB. So $X = \lceil \frac{S}{128\textrm{MB}} \rceil$ and $Y=\lceil\frac{S''}{128\textrm{MB}}\rceil$. We assume $f(x)$ and $g(x)$ to be linear, $f(x)=\alpha_1 x$ and $g(x)= \alpha_2 x$ where $\alpha_1 = 0.8$s/MB and $\alpha_2 = 0.9$s/MB \cite{Estimation_MapReduce}. We use typical values for data rates, $V_i = 100\textrm{MB/s}$, $V_o = 100\textrm{MB/s}$ and $V_n = 10\textrm{MB/s}$ \cite{n9,Estimation_MapReduce}. For our experiments, we consider the jobs to be non-preemptive. Hence we do not need to decompose the jobs and only assign deadlines to them. Figure~\ref{fig:workload} shows the variation in active workload over time with estimation of job lengths.


\begin{table*}[!t]
  \centering
  \caption{Cluster Sizes and Deadlines for Workload Classification for GCP}
   \begin{tabular}{c|r|r|r|r|r|r|r|r|r|r|c}
    \hline \hline
    Cluster & \multicolumn{5}{c|}{Workload A}        & \multicolumn{5}{c|}{Workload B}        & Deadline \\
    \hline
          & \# Jobs & \% Jobs & $S$ (MB) & $S'$ (MB) & $S''$ (MB) & \# Jobs & \% Jobs & $S$ (MB) & $S'$ (MB) & $S''$ (MB) & (\# slots) \\ \hline
    1     & 5691  & 96.56 & 0.02  & 0.00  & 0.67  & 6313  & 95.10 & 0.02  & 0.00  & 0.48  & 1 \\
    2     & 116   & 1.97  & 44856.77 & 15493.69 & 83.89 & 223   & 3.36  & 39356.46 & 6594.93 & 99.26 & 2 \\
    3     & 27    & 0.46  & 57121.85 & 148012.87 & 16090.40 & 41    & 0.62  & 110076.24 & 282.08 & 1.60  & 3 \\
    4     & 23    & 0.39  & 125953.59 & 0.00  & 51.89 & 25    & 0.38  & 379363.01 & 0.00  & 521.45 & 4 \\
    5     & 19    & 0.32  & 0.33  & 0.00  & 49045.29 & 16    & 0.24  & 0.04  & 0.00  & 40355.53 & 5 \\
    6     & 8     & 0.14  & 207984.10 & 414045.45 & 3095.56 & 7     & 0.11  & 132529.27 & 383548.19 & 31344.38 & 6 \\
    7     & 5     & 0.08  & 541522.77 & 0.00  & 0.05  & 4     & 0.06  & 258152.65 & 1020741.05 & 22631.52 & 7 \\
    8     & 3     & 0.05  & 0.05  & 0.00  & 203880.59 & 3     & 0.05  & 0.29  & 0.00  & 311410.40 & 8 \\
    9     & 1     & 0.02  & 7201446.27 & 48674.26 & 0.10  & 3     & 0.05  & 1182734.09 & 3.93  & 0.01  & 9 \\
    10    & 1     & 0.02  & 934594.27 & 8413335.44 & 0.06  & 3     & 0.05  & 0.56  & 0.00  & 622103.12 & 10 \\
    \hline
    \end{tabular}%
  \label{table:gcp}%
\end{table*}%

\subsubsection*{Deadline assignment}
For VFW($\delta$), the deadline $D$ is uniform and is assigned in terms of number of slots the workload can be delayed. For our simulation, We vary $D$ from $1-12$ slots which gives latency from 5 minutes upto 1 hour. This is realistic as deadlines of 8-30 minutes for MapReduce workload have been used in the literature \cite{n10,n11}. For GCP, we use k-means clustering to classify the workload into 10 groups based on the map, shuffle and reduce bytes ($S, S', S''$). The characteristics of each group are depicted in Table~\ref{table:gcp}. From Table~\ref{table:gcp}, it is evident that smaller jobs dominate the workload mix, as discussed in Section IIA. For each new class of jobs we assign a deadline from $1-10$ slots such that smaller class has larger deadline and larger class of jobs has smaller deadline. The deadline for a job should not be less than the length of a job. If the assigned deadline is less than the job length, we update the deadline to be equal to the length of the job.

\subsection{Analysis of the Simulation}
We now analyze the impact of different parameters on cost savings provided by VFW($\delta$) and GCP. We then compare VFW($\delta$) and GCP for uniform deadline (GCP-U).

\subsubsection*{Impact of deadline}
The first parameter we study is the impact of different deadline requirements of the workload on the cost savings. Figure~\ref{fig:deadline} shows that even for deadline $D$ as small as 2 slots, the cost is reduced by $\sim$40\% for GCP-U, $\sim$20\% for VFW($\delta$) while the offline algorithm gives a cost saving of $\sim$60\% compared to the naive algorithm. It also shows that for all the algorithms, large $D$ gives more cost savings as more workload can be delayed to reduce the variation in the workload. As $D$ grows larger the cost reduction from GCP-U and VFW($\delta$) approaches offline cost saving which is as much as 70\%. For VFW($\delta$), the cost saving is always less than GCP-U for both the workload.


\begin{figure}[!t]
\centerline{\subfigure[Workload A]{\includegraphics[width
=1.7in]{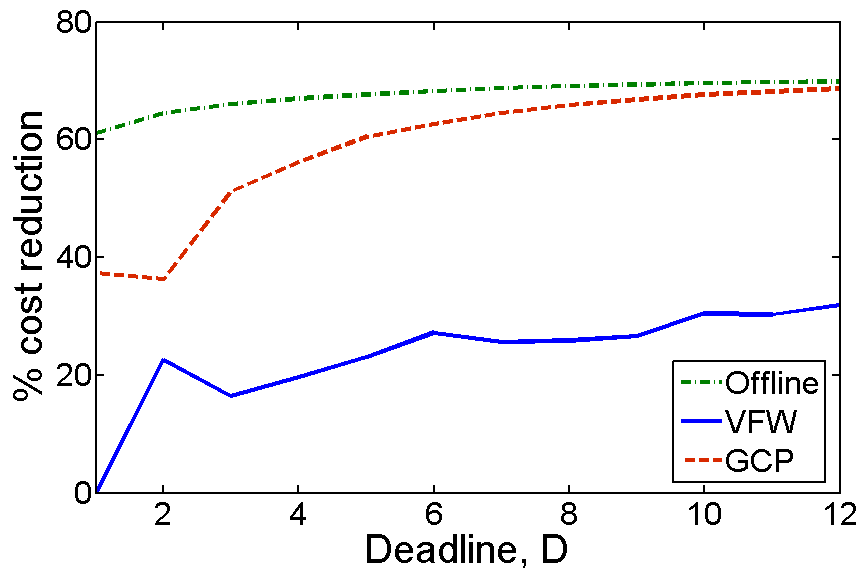}
}
\hfil
\subfigure[Workload B]{\includegraphics[width=1.7in]{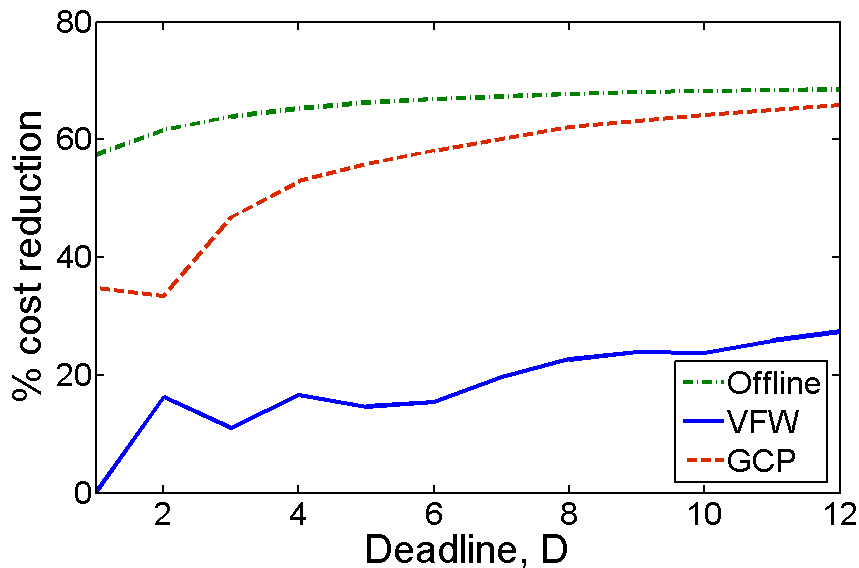}
}}
\caption{Impact of deadline on cost incurred by GCP-U, Offline and VFW($\delta$) with $\delta=D/2$. }
\label{fig:deadline}
\end{figure}

\subsubsection*{Impact of $\delta$ for VFW($\delta$)}
The parameter $\delta$ is used as a lookahead to detect a valley in the VFW($\delta$) algorithm. If $\delta$ is large, valley detection performs well but it may be too late to fill the valley due to the deadlines. On the other hand if $\delta$ is small, valley detection does not work well because the capacity has already gone down to the lowest value. Figure~\ref{fig:valleydetect} illustrates the valley detection for small $\delta$ and large $\delta$. Although the cost savings from VFW($\delta$) largely depends on the nature of the workload curve, Figure~\ref{fig:delta} shows that $\delta\sim D/2$ is a conservative estimate for better cost savings.

\begin{figure}[!t]
\begin{center}
\includegraphics[width=3.0in]{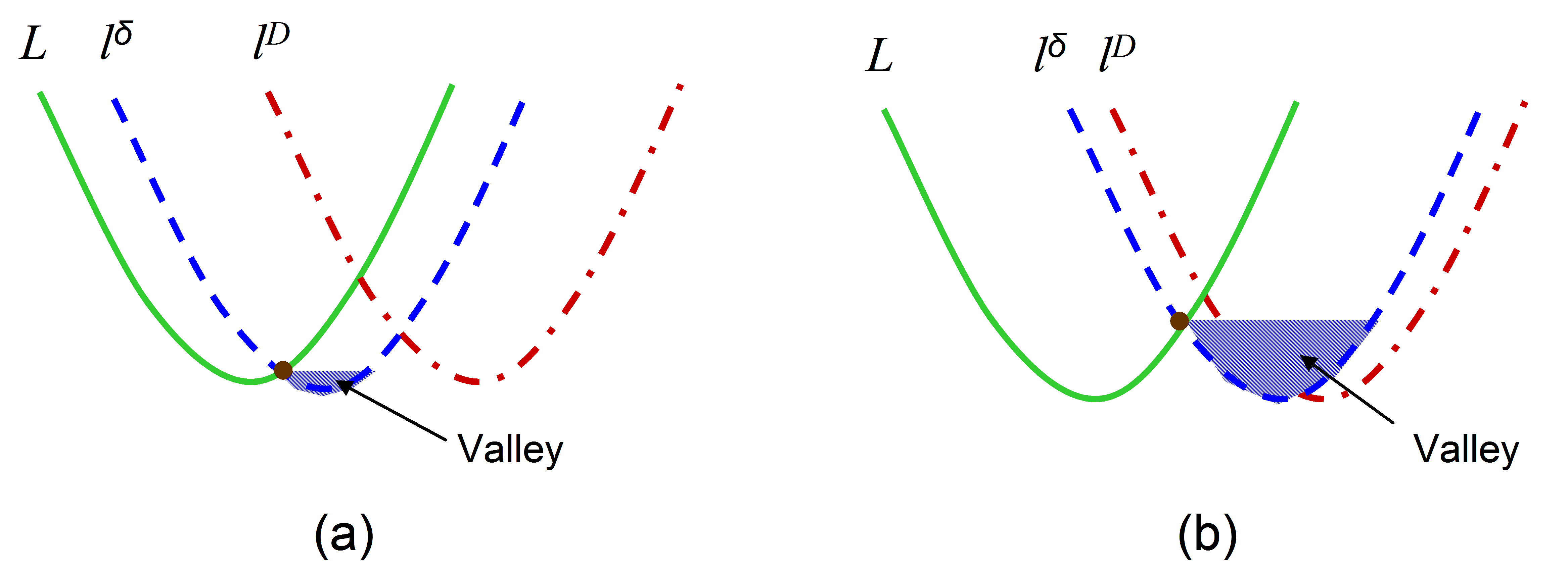}
\caption{Valley detection for (a) small $\delta$ and (b) large $\delta$ for VFW($\delta$).}
\label{fig:valleydetect}
\end{center}
\end{figure}

\begin{figure}[!t]
\centerline{\subfigure[Workload A]{\includegraphics[width
=1.7in]{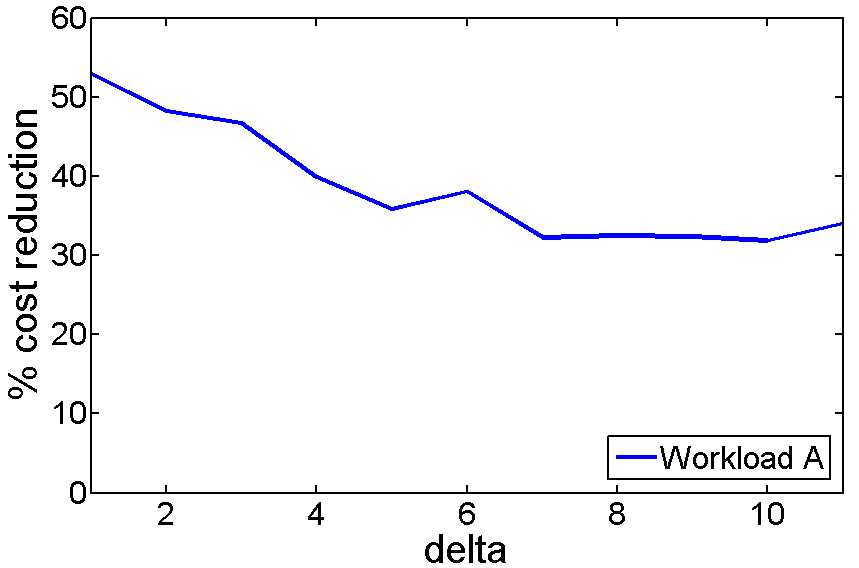}
}
\hfil
\subfigure[Workload B]{\includegraphics[width=1.7in]{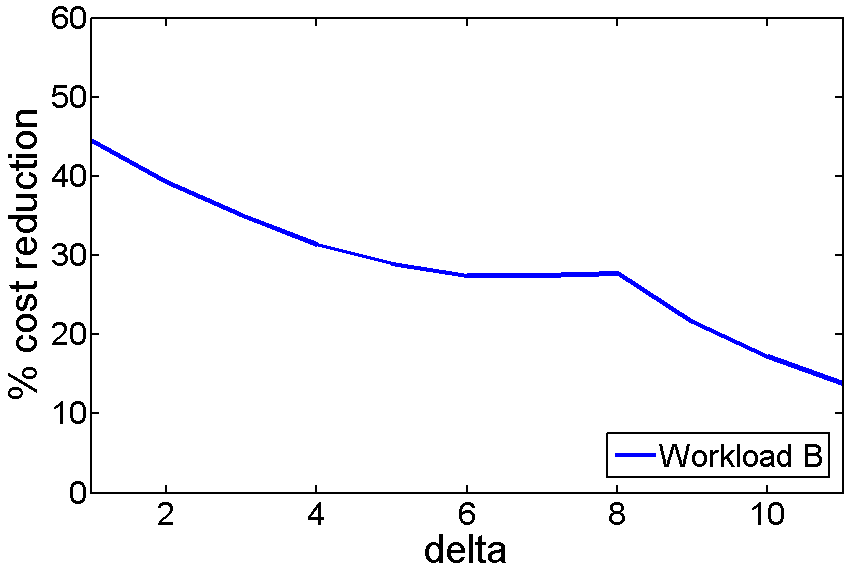}
}}
\caption{Impact of $\delta$ for VFW($\delta$) with deadline $D=12$. }
\label{fig:delta}
\end{figure}

\subsubsection*{Performance of GCP}
We evaluated the cost savings from GCP by assigning different deadline by classifying the workload as shown in Table~\ref{table:gcp}. For conservative estimates of deadline requirements (1-10), we found 47.66\% cost reduction for Workload A and 45.65\% cost reduction for Workload B each of which remains close to the offline optimal solutions.


\subsubsection*{Comparision of VFW($\delta$) and GCP}
We compare GCP for uniform deadline (GCP-U) with VFW($\delta$) for $\delta=D/2$. Figure~\ref{fig:deadline} illustrates the cost reduction for VFW($\delta$) and GCP-U with different deadlines $D = 1-12$. For both the workload, GCP-U performs better than VFW($\delta$). However for some workload, valley filling with workload as in VFW($\delta$) can be more beneficial than provisioning capacity for $D$ consecutive slots as in GCP. Hence we conclude that the comparative performance of the online algorithms depends largely on the nature of the workload. Since both the algorithms are based on linear program, they take around 10-12ms to compute schedule at each step.

\section{Experimentation}
In this section, we validate our algorithms on MapReduce workload by provisioning capacity on a Hadoop cluster. We evaluate the cost-savings by energy consumption calculated from common power model using different measured metrics.

\subsection{Experimental Setup}
We setup a Hadoop cluster (version 0.20.205) consisting of 35 nodes on Amazon's Elastic Compute Cloud (EC2) \cite{n1,n2}. Each node in the cluster is a small instance with 1 virtual core, 1.7 GB memory, 160 GB storage. We configured one node as master and four core nodes to contain the Hadoop DFS and the other 30 nodes as task nodes. The provisioning is done on the task nodes dynamically. We used the Amazon Elastic MapReduce service for provisioning capacity on the Hadoop cluster. Amazon Elastic MapReduce takes care of provisioning machines and migration of tasks between machines while keeping all data available. We used the Statistical Workload Injector for MapReduce (SWIM) \cite{n4} to generate the MapReduce workload for our cluster using the Facebook traces from Figure~\ref{fig:workload}(a). We run our experiment for 4 hours with slot length of 5 minutes. For the traces of Figure~\ref{fig:workload}(a), 602 jobs were released in the first 48 slots.

We first schedule the jobs and provision the task nodes by the `follow the workload' strategy. We then schedule the same jobs and provision the task nodes using our algorithms as illustrated in Figure~\ref{fig:solution}. In order to make comparison between VFW($\delta$) and GCP algorithms we used a uniform deadline of 10 minutes ($D=2$). In each of the experiments, we measured the seven metrics (available from Amazon Cloudwatch) for each of the `running' nodes in each time slot over the time interval of 4 hours and 10 minutes (50 slots). In the last 2 slots, the capacity of the task nodes were provisioned to zero for the `follow the workload' algorithm while our algorithms execute the delayed workload in those slots. All the jobs released in the first 48 slots were completed before the end of 50th slot. The seven metrics that are available for measurement for each virtual machines are: CPUUtilization, DiskReadBytes, DiskReadOps, DiskWriteBytes, DiskWriteOps, NetworkIn and NetworkOut.

%
%
%
%
%
%
%
%
%

\begin{figure*}[!t]
\centerline{\subfigure[Follow the workload]{\includegraphics[width =
2.2in]{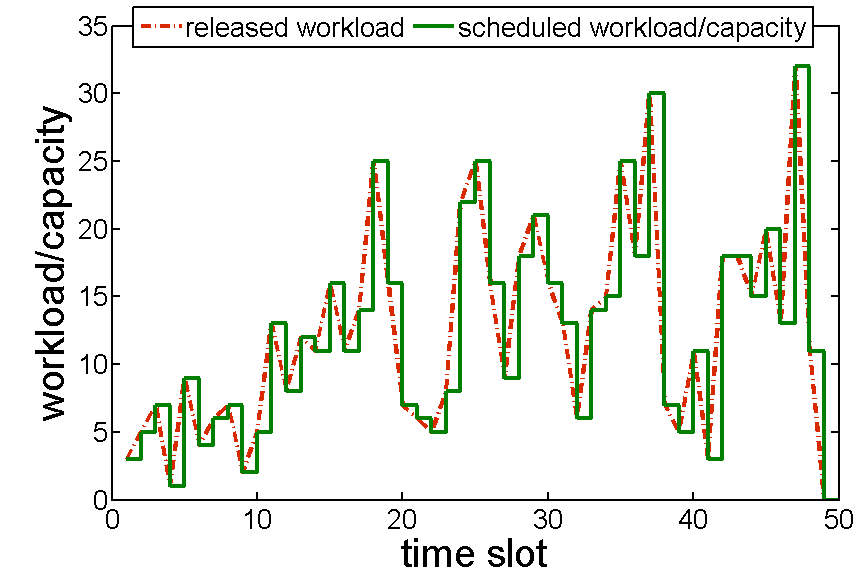}
}
\hfil
\subfigure[VFW($\delta$)]{\includegraphics[width=2.2in]{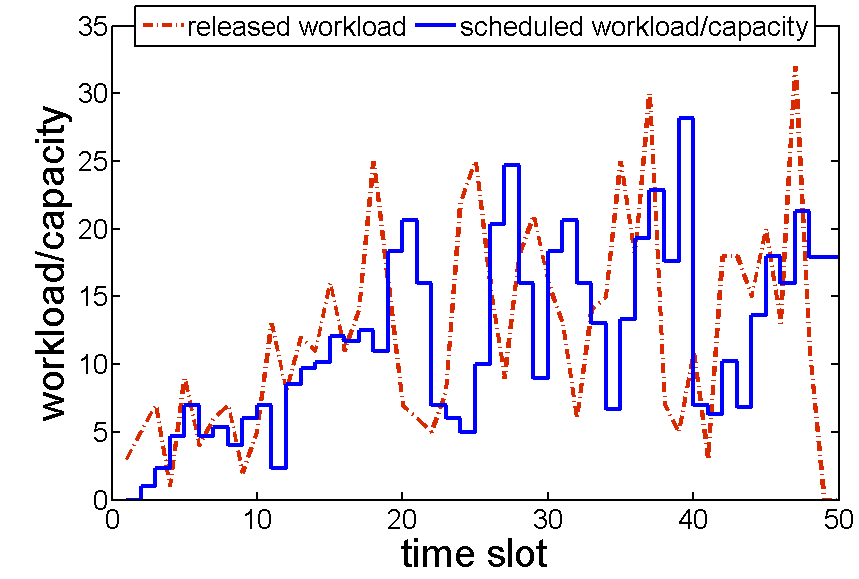}
}
\hfil
\subfigure[GCP-U]{\includegraphics[width =2.2in]{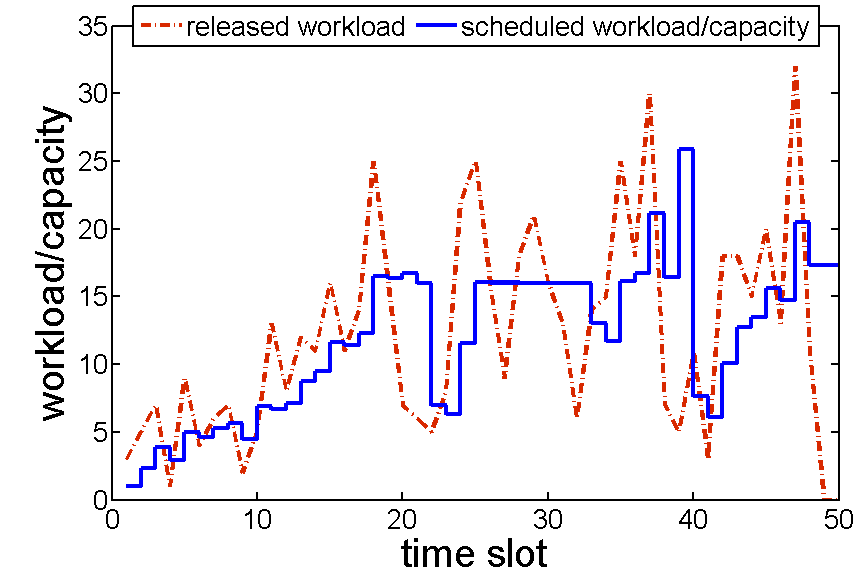}
}
}
\caption{The solutions for (a) Follow the workload, (b) VFW($\delta$) and (c) GCP-U algorithms with uniform deadline $D=2$ slots, $\delta=1$ and time slot = 5 mins.}
\label{fig:solution}
\end{figure*}

\subsection{Experimental Results}
We now discuss the results from the experimentation and compare the energy consumption between different algorithms.

\subsubsection*{Power Metering}
We use the general power model to evaluate energy consumption for the algorithms \cite{n7,n8}. The energy consumed by a virtual machine is represented as the sum of the energy consumption for utilization, disk operations and network IO,
\begin{equation}
\label{eqn:energy1}
E_{vm}(T) = E_{util,vm}(T) + E_{disk,vm}(T) + E_{net,vm}(T)
\end{equation}
where the energy consumption is over the duration $T$. The energy consumption for each of the components over a time slot $t$ (of length $\tau$) can be computed by the following equations.
\begin{IEEEeqnarray}{lll}
E_{util,vm}(t) &= & \alpha_{cpu} u_{cpu}(t) + \gamma_{cpu} \\
E_{disk,vm}(t) &= & \alpha_{rb} b_r(t) + \alpha_{wb} b_w(t) \nonumber\\
& & + \alpha_{ro} n_r(t) + \alpha_{wo} n_w(t) + \gamma_{disk} \nonumber\\
E_{net,vm}(t) &= & \alpha_{ni} b_{in}(t) + \alpha_{no} b_{out}(t) + \gamma_{net}\nonumber
\end{IEEEeqnarray}
where $u_{cpu}(t)$ is the average utilization, $b_r(t)$ and $b_w(t)$ are the total bytes read and written to disk, $n_r(t)$ and $n_w(t)$ are the total number of disk read and writes and $b_{in}(t)$ and $b_{out}(t)$ are the total bytes of network IO for the virtual machine over the time interval $t$. Since the difference in energies for disk read and write and network input and output are negligible \cite{n7}, we use common parameters $b_{db}(t)$, $b_{do}(t)$, and $b_{net}(t)$ by taking the sum of the respective values.  We normalize each of these values with their respective maximum values (in the interval $T$) so that each of these become a fraction between 0 and 1 and can be put in equation~(\ref{eqn:energy1}),
\begin{IEEEeqnarray}{lll}
\label{eqn:energy2}
E_{vm}(t) &= & \alpha_{cpu} u_{cpu}(t) + \gamma_{cpu} \\
& & + \alpha_{disk} u_{disk}(t) + \gamma_{disk} + \alpha_{dops} u_{dops}(t) + \gamma_{dops} \nonumber\\
& & + \alpha_{net} u_{net}(t) + \gamma_{net} \nonumber
\end{IEEEeqnarray}
where $u_{disk}(t)$, $u_{dops}(t)$ and $u_{net}(t)$ represent the normalized values of $b_{db}(t)$, $b_{do}(t)$, and $b_{net}(t)$ respectively.
If $m_t$ machines are active at time slot $t$, then the total energy consumed over the time interval $T$ can be computed using the following equation:
\begin{equation}
E(T) = \sum_{t=1}^T \sum_{i=1}^{m_t} E_{i}(t) * \frac{\tau}{3600} \text{  Watt-hour}
\end{equation}
where $E_{i}(t)$ is the energy consumed at machine $i$ over time slot $t$. To compute energy consumption, we used parameters from \cite{n8} listed in Table~\ref{table:parameter}. Typical values are used for cpu utilization, disk I/O and network I/O. Idle disk/network powers are negligible with respect to dynamic power and scale of workload.


\begin{table}[!ht]
  \centering
  \caption{Power Model Parameters}
    \begin{tabular}{c|l|c}
    \hline\hline
    Parameter & Comment & Value   \\
    \hline
    $\alpha_{cpu}$ &  Scaling factor: Utilization    &   25.70      \\
    $\alpha_{disk}$ &   Scaling factor: Disk Rd/Wr.     &   7.21        \\
    $\alpha_{dops}$ &  Scaling factor: Disk Op.     &      0     \\
    $\alpha_{net}$ &  Scaling factor: Network IO     &      0.66      \\
    $\gamma_{cpu}$ &   Idle cpu power consump.    &    60.30         \\
    $\gamma_{disk}$, $\gamma_{dops}$ &  Idle disk power consump.    &    0      \\
    $\gamma_{net}$ &   Idle network power consump.   &        0    \\
    \hline
    \end{tabular}%
  \label{table:parameter}%
\end{table}%

The total energy consumption and the \% reduction with respect to `follow the workload' in each of the metrics for different schedules are illustrated in Table~\ref{table:reduction}. For the period of 4 hours 10 minutes (50 slots), GCP algorithm gives energy reduction of $\sim$12\% which is significantly better than the reduction of $\sim$6.02\% from the VFW($\delta$) algorithm. The reductions from both the algorithms are far better with respect to workload schedule without provisioning. Table~\ref{table:reduction} also shows that variation in CPU utilization, Disk I/O and Network I/O across different algorithms. This variation results from the difference in capacity provisioning across algorithms that changes migration of jobs and disk I/O in the cluster. Figure~\ref{fig:energyconsumption} illustrates the average energy consumption within each slot over the time interval showing significant reduction in the peak energy consumption. As the provisioning algorithms cut off peaks from the workload and provision the machines without wasting computation capacity, they reduce the peak energy consumption for the data center.

\begin{table*}[!t]
  \centering
  \caption{Total energy consumption and the total values for different Metrics from the cluster for different schedule}
    \begin{tabular}{l|c|c|c|c|c|c}
    \hline\hline
    Metrics & No Provisioning & Follow  & VFW($\delta$) &  \% Reduction & GCP-U &  \% Reduction \\
    \hline
    Energy Consumption(kWh) &   8.60    &   4.46    &  4.19     &    6.02   & 3.93 & 11.96 \\
    CPUUtilization(sum) &   32505.95    &   22805.98    &  21014.51     &     7.86  & 20400.02& 10.55 \\
    DiskReadBytes(GB) &   0.25    &   12.95    &   7.56    &     41.64  & 3.85 & 70.29 \\
    DiskWriteBytes(GB) &  10.42     &   8.01    &   8.44    &    -5.48   & 6.55 & 18.19\\
    DiskReadOps(count) &  18883     &   1109320    &   710451    &    35.96 & 396070 & 64.30   \\
    DiskWriteOps(count) &  1746347     &   1134108    &   1020343    &    10.03 & 901860 &  20.48  \\
    NetworkIn(GB) &  45.69     &   42.30    &   43.69    &    -3.29  & 42.88 & -1.38 \\
    NetworkOut(GB) &  44.21     &  42.45     &   38.64    &    8.97  & 41.48 &  2.29\\
    \hline
    \end{tabular}%
  \label{table:reduction}%
\end{table*}%





\begin{figure}[!t]
\begin{center}
\includegraphics[width=2.5in]{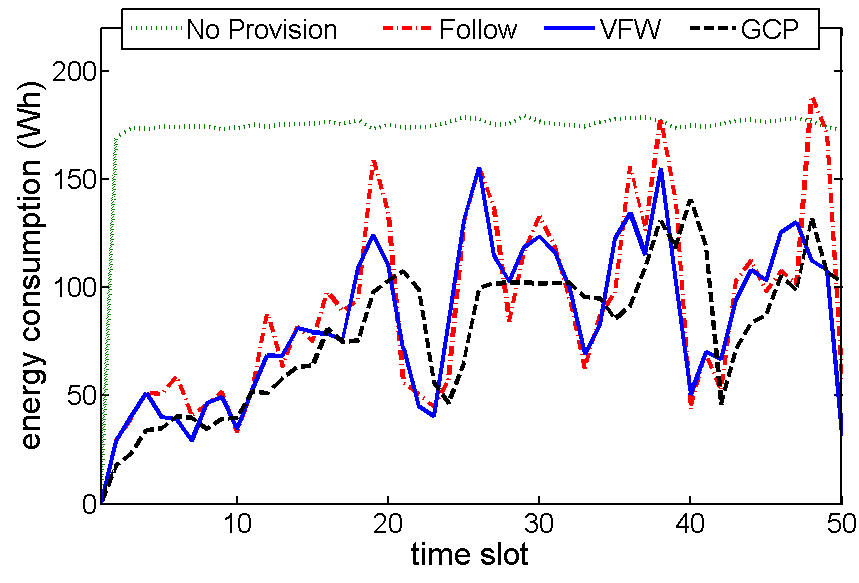}
\caption{Average energy consumption from the cluster with time slots of 5 minutes, over a period of 4 hours.}
\label{fig:energyconsumption}
\end{center}
\end{figure}

\section{Related Work}
With the importance of energy management in data centers, many scholars have applied energy-aware scheduling because of its low cost and practical applicability. In energy-aware scheduling, most work tries to find a balance between energy cost and performance loss through DVFS (Dynamic Voltage and Frequency Scaling) and DPM (Dynamic Power Management), which are the most common system-level power saving methods.  Beloglazov et al. \cite{1} give the taxonomy and survey on energy management in data centers. Dynamic capacity provisioning is part of DPM technique.  Chase et al. \cite{2} introduce the executable utility functions to quantify the value of performance and use economic approach to achieve resource provisioning.  Pinheiro et al. \cite{3} consider resource provisioning in both application and operating system level. They dynamically turn on or turn off nodes to adapt to the changing load, but do not consider the switching cost.

Most work on dynamic capacity provisioning for independent workload uses models based on queueing theory \cite{4,8}, or control theory \cite{7,29}. Recently Lin et al. \cite{9} used more general and common energy model and delay model and designed a provisioning algorithm for service jobs (e.g. HTTP requests) considering switching cost for the machines. They proposed a lazy capacity provisioning (LCP) algorithm which dynamically turns on/off servers in a data center to minimize energy cost and delay cost for scheduling workload. However their algorithm does not perform well for high peak-to-mean ratio (PMR) of the workload and does not provide bound on maximum delay. Moreover, LCP aims at minimizing the average delay while we regard latency as the deadline constraint. Instead of penalizing the delay, we purposely defer jobs within deadline in order to reduce the switching cost of the servers.


Many applications in real world require delay bound or deadline constraint e.g. see Lee et al. \cite{24}. In the context of energy conservation, deadline is usually a critical adjusting tool between performance loss and energy consumption. Energy efficient deadline scheduling was first studied by Yao et al. \cite{10}. They proposed algorithms, which aim to minimize energy consumption for independent jobs with deadline constraints on a single variable-speed processor. After that, a series of work was done to consider online deadline scheduling in different scenarios, such as discrete-voltage processor, tree-structured tasks, processor with sleep state and overloaded system \cite{11,12}. In the context of data center, most prior work on energy management merely talks about minimizing the average delay without any bound on the delay. Recently, Mukherjee et al. \cite{17} proposed online algorithms considering deadline constraints to minimize the computation, cooling and migration energy for machines.  Goiri et al. \cite{Goiri} considered only batch jobs and proposed GreenSlot, a scheduler with deadline requirements for the jobs. GreenSlot predicts the amount of solar energy that will be available in near future and schedules the workload to maximize the green energy consumption while meeting the jobs' deadline. However these works are on job assignment problem and not on dynamic resource provisioning problem, where the number of needed servers is given in advance.

Recently researchers have used scheduling with deferral to improve performance of MapReduce jobs \cite{n6,n11}. Although MapReduce was designed for batch jobs, it has been increasingly used for small time-sensitive jobs. Delay scheduling with performance goals was proposed by Zaharia et al. \cite{n6} for scheduling jobs inside a Hadoop cluster with given resources. Verma et al. introduced a SLA-driven scheduling and resource provisioning framework considering given soft-deadline requirements for the MapReduce jobs \cite{n9,n10}. In contrast to these works, we consider hard-deadlines and schedule jobs within those deadlines and provision capacity to save energy.
Recently, Chen et al. \cite{n5} identified a large class of interactive MapReduce workload and proposed policies for scheduling batch and small interactive jobs in separate cluster without any provisioning mechanism for the machines in the cluster. In contrast, we propose provisioning algorithms for the mix of batch and interactive jobs under bounded latency with constant competitive ratio.


%
%




\section{Conclusion}
We have shown that significant reduction in energy consumption can be achieved by dynamic deferral of workload for capacity provisioning inside data centers. We have proposed two new algorithms, VFW($\delta$) and GCP, for provisioning the capacity and scheduling the workload while guaranteeing the deadlines. The algorithms take advantage from the flexibility in the latency requirements of the workloads for energy savings and guarantee bounded cost and bounded latency under very general settings - arbitrary workload, general deadline and general energy cost models. Further both the VFW($\delta$) and GCP algorithms are simple to implement and do not require significant computational overhead. Additionally, the algorithms have constant competitive ratios and offer noteworthy cost savings as proved by theory and demonstrated by simulations respectively. We have validated our algorithms on MapReduce workload by provisioning capacity on a Hadoop cluster. For a small interval of 4 hours, we found $\sim$6.02\% total energy savings for VFW($\delta$) and $\sim$12\% for GCP with respect to the naive `follow the workload' approach. Both the algorithms achieve more than 50\% reduction in energy consumption with respect to `no provisioning' which is a common practice for the current data center providers. Although we have used MapReduce workload for validation, our algorithms can be applied for any workload as data centers have separate (physical/virtual) clusters for MapReduce and non-MapReduce jobs. The provisioning can be done on each such cluster. In order to reduce the energy consumption, the data center providers should provision their capacity (physical/virtual) and utilize the flexibilities from SLAs via dynamic deferral.


\section*{Acknowledgment}
This work was sponsored in part by the Multiscale Systems
Center (MuSyC) and NSF Variability Expedition.

{\small{

}}

\end{document}